\documentclass[letterpaper,10pt,conference]{ieeeconf}
\IEEEoverridecommandlockouts
\usepackage{cite}
\usepackage{amsmath,amssymb,amsfonts}
\usepackage{algorithmic}
\usepackage{graphicx}
\usepackage{textcomp}
\usepackage{xcolor}

\usepackage{balance}
\usepackage{booktabs}
\usepackage{mathtools,mathrsfs,optidef}
\mathtoolsset{showonlyrefs=true}

\newtheorem{theorem}{Theorem}

\newtheorem{lemma}{Lemma}
\newtheorem{corollary}{Corollary}
\newtheorem{remark}{Remark}

\newcommand{\R}{\mathbb{R}}
\newcommand{\Z}{\mathbb{Z}}

\DeclareMathOperator{\diag}{diag}

\DeclareMathOperator{\tr}{tr}

\usepackage{tikz}
\usetikzlibrary{arrows.meta,calc,positioning,shapes}
\tikzset{
  every node/.style = {outer sep=0.12cm, inner sep=0},
  arrow/.style = {-{Triangle[length=0.25cm]}, thick},
  line/.style = {thick},
  block/.style = {rectangle, draw, minimum height=0.8cm, minimum width=2cm, thick, outer sep=0},
  sum/.style = {thick, circle, draw, inner sep=0, minimum size=6pt, outer sep=0},
  point/.style = {radius=2pt}
}

\newcommand{\graph}{\mathcal{G}}

\title{\LARGE\textbf{Distributed Event-Triggered Consensus Control of Discrete-Time Linear Multi-Agent Systems under LQ Performance Constraints}}

\author{Shumpei~Nishida and Kunihisa~Okano%
  \thanks{This work was supported by JST SPRING Grant Number JPMJSP2101 and JSPS KAKENHI Grant Number 25K01453.}
  \thanks{The authors are with the Graduate School of Science and Engineering,
    Ritsumeikan University,
    Shiga 525-8577 Japan.
    E-mails: {\footnotesize \{\texttt{re0158ff@ed.}, \texttt{kokano@fc.}\}%
    \texttt{ritsumei.ac.jp}}.}
}

\begin{document}

\maketitle
\thispagestyle{empty}
\pagestyle{empty}

\begin{abstract}
This paper proposes a distributed event-triggered control method that not only guarantees consensus of multi-agent systems but also satisfies a given LQ performance constraint.
Taking the standard distributed control scheme with all-time communication as a baseline, we consider the problem of designing an event-triggered communication rule such that the resulting LQ cost satisfies a performance constraint with respect to the baseline cost while consensus is achieved.
The main difficulty is that the performance requirement is global, whereas triggering decisions are made locally and asynchronously by individual agents, which cannot directly evaluate the global performance degradation.
To address this issue, we decompose allowable degradation across agents and design a triggering rule that uses only locally available information to satisfy the given LQ performance constraint.
For general linear agents on an undirected graph, we derive a sufficient condition that guarantees both consensus and the prescribed performance level.
We also develop a tractable offline design method for the triggering parameters.
Numerical examples illustrate the effectiveness of the proposed method.
\end{abstract}

\section{Introduction}

Distributed and cooperative control of multi-agent systems has been studied extensively because of its broad applications, including vehicle formation \cite{ren2008distributed} and power networks \cite{bidram2014distributed}.
In particular, the consensus problem, which aims to make all agents reach agreement through local information exchange \cite{olfati2007consensus}, is one of the fundamental problems and continues to attract significant attention \cite{hengster2013synchronization,jiao2019suboptimality,feng2021consensusability,yuan2025resilient}.
However, when communication among agents is implemented over digital networks, frequent information exchange is often impractical due to limited communication resources.
This motivates distributed event-triggered control, in which agents transmit information only when prescribed conditions are satisfied, thereby reducing unnecessary communication while achieving consensus \cite{garcia2014decentralized,hu2015consensus,nowzari2019event,mishra2023event}.

Despite this progress, most existing studies on distributed event-triggered consensus control focus mainly on whether consensus is achieved, with only limited analysis of control performance under event-triggered communication.
Since consensus only describes whether the agents asymptotically reach agreement, it does not directly quantify closed-loop performance, such as transient behavior.
To address this limitation, several studies have examined the control performance of distributed event-triggered consensus control using quadratic performance criteria \cite{meister2022analysis,meister2024time,antunes2023consistent,meister2026improving}.
For example, \cite{meister2022analysis} and \cite{meister2024time} show that time-triggered control can outperform event-triggered control in some settings, whereas \cite{antunes2023consistent} proposes an event-triggered control method that achieves a lower cost than time-triggered control.
Moreover, \cite{meister2026improving} shows that using more local information can improve closed-loop performance but eliminate the performance advantage of event-triggered control.
Although these studies clarify performance properties of event-triggered consensus control, they are limited to simplified settings, such as integrator dynamics, special communication graphs, and performance criteria that quantify only disagreement among agents.
Therefore, it remains unclear how to design distributed event-triggered communication mechanisms for general linear multi-agent systems with explicit performance guarantees.

In this paper, we study distributed event-triggered consensus control for discrete-time linear multi-agent systems under an LQ performance constraint.
We consider a quadratic cost that penalizes both disagreement among agents and input energy, and take a distributed control scheme with all-time communication as the baseline.
We then formulate the problem of designing a distributed event-triggered communication mechanism that achieves consensus while satisfying a prescribed performance bound with respect to this baseline.
The main difficulty is that the performance requirement is global, whereas triggering decisions is made locally and asynchronously using only locally available information.
To address this difficulty, we introduce a local performance index that quantifies the effect of local triggering on the global cost, and derive a distributed triggering rule by decomposing the allowable performance degradation across agents.
The resulting event-triggered method uses only locally available information and is designed to satisfy the given LQ performance constraint.
To ensure implementability under intermittent communication, we also introduce local state predictors for neighbor agents based on transmitted data.
With this framework, we derive sufficient conditions under which the proposed method guarantees both consensus and the prescribed performance level.
We also provide a tractable design method for the triggering parameters.
In contrast to existing studies \cite{meister2022analysis,meister2024time,antunes2023consistent,meister2026improving}, the proposed framework is not restricted to integrator dynamics or to performance criteria that quantify only disagreement among agents.
Moreover, unlike conventional norm-based triggering rules in distributed event-triggered consensus control, the proposed triggering rule is derived from a performance-oriented analysis.

This paper is organized as follows.
Section~\ref{sec:problem} formulates the consensus problem under the LQ performance constraint.
Section~\ref{sec:standard_consensus} presents preliminary results on consensus control.
Section~\ref{sec:main_results} proposes the distributed event-triggered control method, and Section~\ref{sec:parameter} describes the parameter design procedure.
Section~\ref{sec:numerical} demonstrates a numerical example, and Section~\ref{sec:conclusion} concludes the paper.


\paragraph*{Notation}
We denote by $\R$ and $\Z_{\geq 0}$ the sets of real numbers and nonnegative integers, respectively.
The $n \times n$ identity matrix is denoted by $I_n$, and $\diag(d_1,\dots,d_n)$ is the diagonal matrix with diagonal entries $d_1,\dots,d_n$.
The Kronecker product of matrices $A$ and $B$ is written as $A \otimes B$.
The vector $\mathbf{1}_N \in \R^N$ is the vector all of whose components are equal to $1$.
For $x \in \R^n$, its Euclidean norm is defined by $\|x\| \coloneqq \sqrt{x^\top x}$.

\section{Problem Formulation}
\label{sec:problem}

Consider a multi-agent system consisting of $N$ homogeneous agents.
Each agent is modeled as
\begin{equation} \label{eq:mas}
  x_i[k+1] = A x_i[k] + B u_i[k],\quad i \in \{1,\dots,N\},
\end{equation}
where $x_i[k] \in \R^n$ and $u_i[k] \in \R^m$ denote the state and control input, respectively.
We assume that the pair $(A,B)$ is stabilizable, whereas $A$ is not necessarily stable.
The communication topology is represented by a connected weighted undirected graph $\graph$.

We consider the following performance measure:
\begin{equation} \label{eq:lq_cost}
  J(x[0]) = \sum_{k=0}^{\infty} x^\top[k] (L \otimes Q) x[k] + u^\top[k] (I_N \otimes R) u[k],
\end{equation}
where $L$ is the Laplacian matrix of $\graph$, $Q$ and $R$ are positive definite matrices, and
$x[k] \coloneqq [x_1^\top[k]\ \dots\ x_N^\top[k]]^\top$, $u[k] \coloneqq [u_1^\top[k]\ \dots\ u_N^\top[k]]^\top$.
This cost captures both disagreement among the agents and input energy.

In this paper, we study a distributed event-triggered control scheme that guarantees a prescribed level of LQ performance with respect to a scheme with all-time communication.
As a baseline, we consider the standard distributed control law
\begin{equation} \label{eq:nominal_controller}
  u_i[k] = -cF \zeta_i[k],\quad
  \zeta_i[k] \coloneqq \sum_{j \in \mathcal{N}_i} a_{ij} \left(x_i[k] - x_j[k]\right),
\end{equation}
where $a_{ij}$ is the $(i,j)$th entry of the weighted adjacency matrix of $\graph$, $c > 0$ is a coupling gain, and $F \in \R^{m \times n}$ is a feedback gain.
We denote by $J_{\mathrm{all}}(x[0])$ the LQ cost achieved by \eqref{eq:nominal_controller}, where all agents exchange the information required to compute $\zeta_i[k]$ at every time step.

Let $t^i_\ell$ denote the transmission time of agent $i$.
We assume that communication is delay-free.
Using the most recently transmitted information, agent $i$ maintains local copies of the predicted states of its neighbors and of itself.
For each $j \in \mathcal{N}_i \cup \{i\}$, these local copies are updated according to
\begin{equation} \label{eq:estimation}
  \hat{x}_j[k+1] = A\hat{x}_j[k] + B\hat{u}_j[k],\quad
  \hat{x}_j[t^j_\ell] = x_j[t^j_\ell],
\end{equation}
where
\begin{equation} \label{eq:control_input_for_estimation}
  \hat{u}_j[k] = u_j[t^j_\ell],\quad t^j_\ell \leq k < t^j_{\ell+1}.
\end{equation}
Since all such agents use the same transmitted state at each transmission instant and the same predictor dynamics, these local copies coincide.
Using \eqref{eq:estimation}, we consider the following distributed event-triggered controller:
\begin{equation} \label{eq:event_triggered_controller}
  u_i[k] = -cF \hat{\zeta}_i[k],\quad
  \hat{\zeta}_i[k] \coloneqq \sum_{j \in \mathcal{N}_i} a_{ij} \left(\hat{x}_i[k] - \hat{x}_j[k]\right).
\end{equation}
We denote by $J_{\mathrm{etc}}(x[0])$ the LQ cost \eqref{eq:lq_cost} under the event-triggered controller \eqref{eq:event_triggered_controller}.

Under the above setup, given a constant $\rho \geq 1$, our objective is to design a distributed event-triggered communication mechanism such that
\begin{equation} \label{eq:lq_performance_constraint}
  J_{\mathrm{etc}}(x[0]) \leq \rho J_{\mathrm{all}}(x[0]),\quad \forall x[0] \in \R^{Nn},
\end{equation}
while ensuring consensus of the multi-agent system, i.e.,
\begin{equation} \label{eq:def_consensus}
  \lim_{k\to\infty} \|x_i[k]-x_j[k]\| = 0,\quad \forall i,j \in \{1,\dots,N\}.
\end{equation}

\begin{remark} \label{rem:baseline_performance}
  The inequality \eqref{eq:lq_performance_constraint} is a performance requirement with respect to the baseline distributed control scheme with all-time communication introduced above.
  In particular, $J_{\mathrm{all}}(x[0])$ is the cost achieved by this scheme and is not necessarily the globally optimal value of the LQ cost.
  Hence, the constant $\rho$ specifies the allowable performance degradation with respect to the selected baseline while reducing transmissions through event-triggered communication.
  A more detailed discussion of this point is given in Section~\ref{sec:standard_consensus}.
\end{remark}

\begin{remark} \label{rem:local_stabilization}
  If local state-feedback controllers of the form $u_i[k] = F_\ell x_i[k]$ are allowed, then, since $(A,B)$ is stabilizable, one can choose $F_\ell$ such that $A+BF_\ell$ is Schur stable.
  In that case, all agents converge to the origin independently, and hence consensus in the sense of \eqref{eq:def_consensus} may be achieved without any communication.
  In this paper, we restrict our attention to the distributed diffusive control laws \eqref{eq:nominal_controller} and \eqref{eq:event_triggered_controller}, which are implemented using information exchanged over the graph, rather than to local stabilization.
\end{remark}

Before presenting the proposed event-triggered control method, we provide sufficient conditions under which the multi-agent system achieves consensus under the distributed controller \eqref{eq:nominal_controller}, and derive an explicit form of the corresponding LQ cost.

\section{Consensus and LQ Cost under All-Time Communication}
\label{sec:standard_consensus}

In this section, we consider consensus control of multi-agent systems under all-time communication, and derive the corresponding LQ cost.

With the distributed controller \eqref{eq:nominal_controller}, the overall closed-loop system is given by
\begin{equation} \label{eq:closed_loop_system}
  x[k+1] = (I_N \otimes A - cL \otimes BF)x[k].
\end{equation}
Since $\graph$ is weighted undirected and connected, the Laplacian matrix $L$ is symmetric positive semidefinite and has a simple zero eigenvalue.
Let $U \in \R^{N \times N}$ be an orthogonal matrix such that
\begin{equation}
  U^\top L U = \Lambda \coloneqq \diag(0,\lambda_2(L),\dots,\lambda_N(L)),
\end{equation}
where $0 < \lambda_2(L) \le \cdots \le \lambda_N(L)$ are the nonzero eigenvalues of $L$.
We partition $U$ as $U = [\tfrac{1}{\sqrt{N}}\mathbf{1}_N \ U_2]$.
Define $\tilde{x}[k] \coloneqq (U^\top \otimes I_n)x[k]$ with $\tilde{x}[k] = [\tilde{x}_1^\top[k]\ \dots\ \tilde{x}_N^\top[k]]^\top$.
Then, the closed-loop system \eqref{eq:closed_loop_system} is transformed into
\begin{equation} \label{eq:closed_loop_system_disagreement}
  \tilde{x}[k+1]
  =
  (I_N \otimes A - c\Lambda \otimes BF)\tilde{x}[k],
\end{equation}
that is,
\begin{align}
  \tilde{x}_1[k+1] &= A \tilde{x}_1[k], \label{eq:agreement_component}\\
  \tilde{x}_i[k+1] &= (A-c\lambda_i(L)BF) \tilde{x}_i[k],\quad i=2,\dots,N. \label{eq:disgreement_component}
\end{align}
From \eqref{eq:agreement_component}, we can see that $\tilde{x}_1[k]$ represents the component of $x[k]$ along the consensus subspace since
\begin{equation}
  \tilde{x}_1[k] = \left(\frac{1}{\sqrt{N}}\mathbf{1}_N^\top \otimes I_n\right) x[k]
\end{equation}
implies that the corresponding component in $\R^{Nn}$ is
\begin{equation}
  \left(\frac{1}{\sqrt{N}}\mathbf{1}_N \otimes I_n\right)\tilde{x}_1[k] \in \operatorname{Im}(\mathbf{1}_N \otimes I_n).
\end{equation}
In contrast, \eqref{eq:disgreement_component} describes the disagreement modes among the agents since
\begin{align}
  x[k] &= (U \otimes I_n)\tilde{x}[k] \\
  &= \left(\frac{1}{\sqrt{N}}\mathbf{1}_N \otimes I_n\right)\tilde{x}_1[k]
  + \sum_{i=2}^N (v_i \otimes I_n)\tilde{x}_i[k],
\end{align}
where $v_i$ is the $i$th column vector of $U_2$.
This means that $x[k]$ is decomposed into an agreement vector and disagreement vectors.
Hence, consensus is achieved if and only if $A-c\lambda_i(L)BF$ is Schur stable for all $i=2,\dots,N$ \cite{hengster2013synchronization}.
Moreover, if consensus is achieved, all agents asymptotically follow the same trajectory \cite{li2017cooperative}:
\begin{equation}
  x_i[k] - \frac{1}{N} \sum_{j=1}^{N} A^k x_j[0] \to 0 ,\quad \forall i \in \{1,\dots,N\}.
\end{equation}

Let $Q_\ell \in \R^{n\times n}$ be a positive semidefinite matrix and $(A,Q_\ell^{1/2})$ be detectable.
Since $(A,B)$ is stabilizable, there exists a matrix $P \succ 0$ which is a solution to the following Riccati equation.
\begin{equation} \label{eq:local_riccati_equation}
  P = Q_\ell + A^\top P A - A^\top PB\left(R+B^\top PB\right)^{-1} B^\top PA.
\end{equation}
We design the feedback gain $F$ as
\begin{equation} \label{eq:feedback_gain}
  F = \left(R+B^\top PB\right)^{-1} B^\top PA.
\end{equation}
Moreover, we define
\begin{equation} \label{eq:def_theta}
  \theta \coloneqq \left(\frac{\lambda_{\min}(R)}{\lambda_{\max}(R + B^\top P B)}\right)^{1/2}.
\end{equation}
Then, we obtain a sufficient condition on $c$ to guarantee consensus of the multi-agent system \eqref{eq:mas} \cite{feng2021consensusability}.

\begin{lemma}
  Suppose that $(A,B)$ is stabilizable and $(A,Q_\ell^{1/2})$ is detectable.
  Assume that the undirected graph $\graph$ is connected.
  If $c>0$ is chosen so that
  \begin{equation} \label{eq:condition_c}
    \frac{1}{(1+\theta)\lambda_2(L)} < c < \frac{1}{(1-\theta)\lambda_N(L)},
  \end{equation}
  the multi-agent system \eqref{eq:mas} reaches consensus under the distributed controller \eqref{eq:nominal_controller}.
\end{lemma}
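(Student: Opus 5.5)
By the modal decomposition already established, consensus under \eqref{eq:nominal_controller} holds if and only if $A - c\lambda_i(L)BF$ is Schur stable for every $i=2,\dots,N$. The plan is therefore to show that, for each $\lambda=\lambda_i(L)$ with $i\ge 2$, the matrix $A-\sigma BF$ with $\sigma = c\lambda$ is Schur stable whenever $|1-\sigma|<\theta$. Condition \eqref{eq:condition_c} guarantees exactly this. Since $\lambda_2(L)\le\lambda_i(L)\le\lambda_N(L)$, it gives $c\lambda_i(L) > 1/(1+\theta)$ and $c\lambda_i(L) < 1/(1-\theta)$, and I will need to relate these to the bound $|1-\sigma|<\theta$. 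More precisely, I expect the natural robustness region to be stated in terms of $1-1/\sigma$ or in terms of $\sigma$ itself, and I will check which form the Lyapunov computation yields.

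The core is a Lyapunov argument with $V(z)=z^\top P z$, where $P$ solves \eqref{eq:local_riccati_equation}. Write $M \coloneqq R+B^\top PB$, so that $F=M^{-1}B^\top PA$. Then $A^\top PB = F^\top M$, and the Riccati equation reads $A^\top PA - P = -Q_\ell + F^\top M F$. Expanding gives
\begin{equation}
(A-\sigma BF)^\top P (A-\sigma BF) - P = -Q_\ell + F^\top M F - 2\sigma F^\top M F + \sigma^2 F^\top B^\top P B F .
\end{equation}
Substituting $B^\top PB = M - R$ turns this into
\begin{equation}
-Q_\ell + (1-\sigma)^2 F^\top M F - \sigma^2 F^\top R F .
\end{equation}
Using $F^\top M F \preceq \lambda_{\max}(M) F^\top F$ and $F^\top R F \succeq \lambda_{\min}(R)F^\top F$, the difference is bounded above by
\begin{equation}
-Q_\ell - \lambda_{\max}(M)\bigl(\sigma^2\theta^2-(1-\sigma)^2\bigr)F^\top F .
\end{equation}
Hence it is negative semidefinite as soon as $|1-\sigma|<\sigma\theta$, that is, $1/(1+\theta)<\sigma<1/(1-\theta)$ (with $\theta<1$ automatic, since $\lambda_{\min}(R)\le\lambda_{\max}(M)$). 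This is precisely the interval obtained from \eqref{eq:condition_c} together with $\lambda_2\le\lambda_i\le\lambda_N$, which resolves the form question above.

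The main obstacle is turning this negative semidefiniteness into Schur stability, since $Q_\ell$ is only positive semidefinite. The plan is to use the strict margin: the difference is at most $-Q_\ell-\epsilon F^\top F$ for some $\epsilon>0$. Suppose $(A-\sigma BF)v=\mu v$ with $|\mu|\ge1$. Applying the bound to $v$ gives $(|\mu|^2-1)v^*Pv\le -v^*Q_\ell v-\epsilon\|Fv\|^2\le0$. Since $P\succ0$, both sides must vanish, so $Fv=0$ and $Q_\ell^{1/2}v=0$. Then $Av=\mu v$ with $|\mu|\ge1$ and $Q_\ell^{1/2}v=0$, which contradicts detectability of $(A,Q_\ell^{1/2})$ via the PBH test. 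Therefore each $A-c\lambda_i(L)BF$ is Schur, and consensus follows from the equivalence recalled at the start.
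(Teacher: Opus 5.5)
Your proof is correct. The paper does not prove this lemma itself but imports it from \cite{feng2021consensusability}, so there is no in-paper argument to compare against. What you have written is a complete, self-contained proof along the standard discrete-time LQR gain-margin lines. The following steps are all sound:
\[
(A-\sigma BF)^\top P(A-\sigma BF)-P=-Q_\ell+(1-\sigma)^2F^\top MF-\sigma^2F^\top RF .
\]
\begin{itemize}
\item the identity above;
\item the upper bound $-Q_\ell-\lambda_{\max}(M)\bigl(\sigma^2\theta^2-(1-\sigma)^2\bigr)F^\top F$;
\item the use of the strict margin to force $Fv=0$ before invoking the PBH test for detectability;
\item the placement of every $c\lambda_i(L)$, $i\ge 2$, in $\bigl(1/(1+\theta),1/(1-\theta)\bigr)$ via $\lambda_2(L)\le\lambda_i(L)\le\lambda_N(L)$.
\end{itemize}

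There are three small points to tidy in the write-up.

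First, drop the tentative target $|1-\sigma|<\theta$ from your opening paragraph. It is neither what you prove nor what is needed. The robustness region your computation delivers is $|1-\sigma|<\sigma\theta$, i.e.\ $|1-\sigma^{-1}|<\theta$, and you should state it that way from the start.

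Second, $\lambda_{\min}(R)\le\lambda_{\max}(M)$ gives only $\theta\le 1$, not $\theta<1$. Equality requires $B^\top PB=0$, hence $PB=0$ and $F=0$. In that degenerate case the upper limit on $c$ is $+\infty$, and your eigenvector argument simply shows that $A$ is Schur.

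Third, you never actually need $P\succ 0$. For $|\mu|\ge 1$ one has $(|\mu|^2-1)v^*Pv\ge 0$ as soon as $P\succeq 0$, so the argument goes through verbatim for a positive semidefinite solution. This is worth recording, because under mere detectability of $(A,Q_\ell^{1/2})$ the stabilizing Riccati solution is in general only semidefinite.
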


Under the above condition, the LQ cost \eqref{eq:lq_cost} is given by
\begin{equation}
  J_{\mathrm{all}}(x[0])
  =
  \sum_{i=2}^{N} \sum_{k=0}^{\infty}
  \tilde{x}_i^\top[k]W_i\tilde{x}_i[k],
\end{equation}
where $W_i \coloneqq \lambda_i(L)Q + c^2\lambda_i^2(L)F^\top RF$.
Therefore, the corresponding LQ cost admits the following explicit form.

\begin{corollary} \label{cor:lq_cost_alltime}
  Suppose that $F$ is given by \eqref{eq:feedback_gain} and $c$ is designed so as to satisfy \eqref{eq:condition_c}.
  Then, the resulting LQ cost \eqref{eq:lq_cost} is
  \begin{equation} \label{eq:lq_cost_alltime}
    J_{\mathrm{all}}(x[0]) = \sum_{i=2}^{N} \tilde{x}_i^\top[0] P_i \tilde{x}_i[0],
  \end{equation}
  where $P_i,\, i=2,\dots,N$ is the solution to
  \begin{equation} \label{eq:lyapunov_eq_Pi}
    P_i = \left(A - c\lambda_i(L)BF\right)^\top P_i \left(A - c\lambda_i(L)BF\right) + W_i.
  \end{equation}
\end{corollary}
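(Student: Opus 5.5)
The plan has three parts: rewrite the cost \eqref{eq:lq_cost} in the modal coordinates $\tilde{x}$, use the preceding lemma for Schur stability of each disagreement mode, and evaluate each modal sum by a discrete Lyapunov equation.

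\emph{Step 1 (diagonalizing the cost).} Under \eqref{eq:nominal_controller} we have $u[k] = -c(L\otimes F)x[k]$ and $x[k] = (U\otimes I_n)\tilde{x}[k]$. Using $U^\top L U=\Lambda$ and $U^\top L^2 U=\Lambda^2$, the stage cost becomes
\begin{equation}
  x^\top (L\otimes Q)x + u^\top(I_N\otimes R)u = \tilde{x}^\top\bigl(\Lambda\otimes Q + c^2\Lambda^2\otimes F^\top R F\bigr)\tilde{x}.
\end{equation}
This matrix is block diagonal with blocks $W_i$. Since $\lambda_1(L)=0$, the block $W_1$ vanishes, so the agreement component $\tilde{x}_1$, which may be unstable under \eqref{eq:agreement_component}, contributes nothing. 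This gives the sum over $i=2,\dots,N$ stated just before the corollary.

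\emph{Step 2 (stability of the disagreement modes).} By the lemma, condition \eqref{eq:condition_c} guarantees consensus. As stated in Section~\ref{sec:standard_consensus}, consensus is equivalent to Schur stability of $A_i \coloneqq A-c\lambda_i(L)BF$ for every $i\ge 2$. Alternatively, I would use the lemma's proof idea directly. One checks that $A_i^\top P A_i - P \preceq -Q_\ell - (\text{positive term})$, which holds because $|1-c\lambda_i|<\theta$ by \eqref{eq:condition_c}.

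\emph{Step 3 (evaluating each modal sum).} By \eqref{eq:disgreement_component}, $\tilde{x}_i[k]=A_i^k\tilde{x}_i[0]$. Each modal sum is therefore $\tilde{x}_i^\top[0]\bigl(\sum_{k\ge0}(A_i^\top)^k W_i A_i^k\bigr)\tilde{x}_i[0]$. Because $A_i$ is Schur, this series converges, and its limit is the unique solution $P_i$ of \eqref{eq:lyapunov_eq_Pi}. To see this, substitute the series into the equation and telescope, or use the standard fact that the Stein equation has a unique solution when $A_i$ is Schur. Summing over $i$ yields \eqref{eq:lq_cost_alltime}.

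The main obstacle is justifying that the infinite sum is finite and equals the Lyapunov solution. This rests entirely on the Schur stability from Step 2, so I would make that step explicit. The remaining steps are routine Kronecker algebra.
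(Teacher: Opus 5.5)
Your proposal is correct and follows essentially the paper's route: the same modal rewriting of the cost with $W_1=0$, Schur stability of $A-c\lambda_i(L)BF$ taken from the lemma, and evaluation of each modal sum through \eqref{eq:lyapunov_eq_Pi}; the only cosmetic difference is that the paper telescopes $\tilde{x}_i^\top[k]P_i\tilde{x}_i[k]-\tilde{x}_i^\top[k+1]P_i\tilde{x}_i[k+1]$ along trajectories, whereas you identify $P_i$ with the convergent series $\sum_{k\ge 0}(A_i^\top)^k W_i A_i^k$, which also gives existence and uniqueness of $P_i$. One minor slip in your optional alternative for Step~2: \eqref{eq:condition_c} yields $|1-c\lambda_i(L)|<\theta\,c\lambda_i(L)$, not $|1-c\lambda_i(L)|<\theta$ (the latter can fail when $c\lambda_i(L)>1$), but your primary justification via the lemma does not depend on it.
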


\begin{proof}
  Under the distributed controller \eqref{eq:nominal_controller}, we have $u[k] = -(cL \otimes F)x[k]$.
  Hence, $J_{\mathrm{all}}(x[0])$ can be transformed into
  \begin{align}
    J_{\mathrm{all}}(x[0])
    &= \sum_{k=0}^\infty
      x^\top[k]
      \left(
        L \otimes Q + c^2 L^2 \otimes F^\top R F
      \right) x[k] \\
    &= \sum_{k=0}^\infty
      \tilde{x}^\top[k]
      \left(
        \Lambda \otimes Q + c^2 \Lambda^2 \otimes F^\top R F
      \right) \tilde{x}[k] \\
    &= \sum_{k=0}^\infty \sum_{i=2}^N
      \tilde{x}_i^\top[k] \bigl( \lambda_i(L) Q \\
      &\hspace{8em} + c^2 \lambda_i(L)^2 F^\top R F \bigr) \tilde{x}_i[k] \\
    &= \sum_{i=2}^N \sum_{k=0}^\infty \tilde{x}_i^\top[k] W_i \tilde{x}_i[k],
  \end{align}
  where the term corresponding to $i=1$ vanishes since $\lambda_1(L)=0$.
  Let $\tilde{A}_i \coloneqq A - c\lambda_i(L)BF$ for each $i=2,\dots,N$.
  Using \eqref{eq:closed_loop_system_disagreement} and \eqref{eq:lyapunov_eq_Pi}, it follows that
  \begin{align}
    \tilde{x}_i^\top[k] W_i \tilde{x}_i[k]
    &= \tilde{x}_i^\top[k]
      \left(
        P_i - \tilde{A}_i^\top P_i \tilde{A}_i
      \right) \tilde{x}_i[k] \\
    &= \tilde{x}_i^\top[k] P_i \tilde{x}_i[k] - \tilde{x}_i^\top[k+1] P_i \tilde{x}_i[k+1].
  \end{align}
  Thus, for each $K \in \mathbb{Z}_{\geq 0}$, we have
  \begin{align}
    &\sum_{k=0}^K \tilde{x}_i^\top[k] W_i \tilde{x}_i[k] \\
    &= \sum_{k=0}^K
      \left(
        \tilde{x}_i^\top[k] P_i \tilde{x}_i[k]
        - \tilde{x}_i^\top[k+1] P_i \tilde{x}_i[k+1]
      \right) \\
    &= \tilde{x}_i^\top[0] P_i \tilde{x}_i[0]
      - \tilde{x}_i^\top[K+1] P_i \tilde{x}_i[K+1].
  \end{align}
  Since $\tilde{A}_i$ is Schur stable, we have $\tilde{x}_i[k] \to 0$ as $k \to \infty$. Hence,
  \begin{equation}
    \lim_{K \to \infty} \tilde{x}_i^\top[K+1] P_i \tilde{x}_i[K+1] = 0.
  \end{equation}
  By taking $K \to \infty$, we obtain
  \begin{equation}
    \sum_{k=0}^\infty \tilde{x}_i^\top[k] W_i \tilde{x}_i[k]
    = \tilde{x}_i^\top[0] P_i \tilde{x}_i[0],
  \end{equation}
  and summing the above equation over $i=2,\dots,N$ yields \eqref{eq:lq_cost_alltime}.
\end{proof}

We note that the existence of a unique positive definite solution $P_i \succ 0$ is guaranteed since $A - c\lambda_i(L)BF$ is Schur stable and $W_i \succ 0$ for $i=2,\dots,N$.

\begin{remark} \label{rem:local_cost}
  The cost $J_{\mathrm{all}}(x[0])$ achieved by the distributed controller \eqref{eq:nominal_controller} is not necessarily the minimum value of the LQ cost \eqref{eq:lq_cost}.
  This is because the feedback gain \eqref{eq:feedback_gain} is not obtained by directly solving the optimal control problem associated with \eqref{eq:lq_cost}.
  Instead, it is designed from a local Riccati equation \eqref{eq:local_riccati_equation} with the local weighting matrices.
  On the other hand, the cost \eqref{eq:lq_cost} is a global performance criterion that depends on the disagreement among agents through the Laplacian matrix $L$, and \eqref{eq:lq_cost_alltime} involves the nonzero eigenvalues of $L$ through $P_i$.
  Hence, the controller design takes local quadratic criteria into account, but does not directly minimize the global LQ cost \eqref{eq:lq_cost}.
  Therefore, the resulting value $J_{\mathrm{all}}(x[0])$ is interpreted as the cost of the selected baseline distributed controller, rather than the globally minimal value of \eqref{eq:lq_cost}.
\end{remark}

In the following section, we employ the feedback gain $F$ given by \eqref{eq:feedback_gain} and coupling gain $c$ satisfying \eqref{eq:condition_c}.

\section{Distributed Event-Triggered Consensus under LQ Performance Constraints}
\label{sec:main_results}



In this section, we propose a distributed event-triggered consensus control method under the LQ performance constraint.

First, we introduce several notations used throughout this paper.
Let $e_i[k] \coloneqq \hat{x}_i[k] - x_i[k]$ and $e[k] \coloneqq [e_1^\top[k]\ \dots\ e_N^\top[k]]^\top$.
We consider the following variable:
\begin{equation} \label{eq:apriori_estimation}
  \bar{x}_i[k] = A\hat{x}_i[k-1] + B\hat{u}_i[k-1],
\end{equation}
which serves as the a priori prediction of $x_i[k]$, with the initial state $\bar{x}_i[0] = 0$.
Accordingly, $\hat{x}_i[k]$ is updated by
\begin{equation}
  \hat{x}_i[k] = \begin{cases}
    x_i[k] & \text{if $k = t^i_\ell$ for $\ell \in \Z_{\geq 0}$},\\
    \bar{x}_i[k] & \text{otherwise}.
  \end{cases}
\end{equation}
We also define $\bar{e}_i[k] \coloneqq \bar{x}_i[k] - x_i[k]$ and $\bar{e}[k] \coloneqq [\bar{e}_1^\top[k]\ \dots\ \bar{e}_N^\top[k]]^\top$.
In addition, let us consider
\begin{align}
  \hat\phi_i[k]
  &\coloneqq
  \frac{1}{2}\sum_{j\in\mathcal N_i} a_{ij}
  \left(\hat{x}_i[k]-\hat{x}_j[k]\right)^\top Q \left(\hat{x}_i[k]-\hat{x}_j[k]\right) \\
  &\quad +
  c^2 \hat{\zeta}_i^\top[k] F^\top R F \hat{\zeta}_i[k], \label{eq:def_phi}
\end{align}
which depends only on locally available information to agent $i$.

Suppose that each agent shares its state at time $k=0$, i.e., $t^i_0 = 0$ for all $i \in \{1,\dots,N\}$.
Then, we have $\hat{x}_i[0] = x_i[0]$.
Using \eqref{eq:def_phi}, each agent asynchronously determines transmission instants $\{t^i_{\ell+1}\}_{\ell \in \Z_{\geq 0}}$ as follows:
\begin{equation} \label{eq:triggering_law}
  t^i_{\ell + 1} = \min\{k > t^i_\ell \colon \bar{e}_i^\top[k] \Omega_i \bar{e}_i[k] > \sigma \hat\phi_i[k-1]\},
\end{equation}
where $\Omega_i \succ 0$ and $\sigma > 0$ are design parameters.
Intuitively, agent $i$ triggers when the weighted prediction error becomes too large compared to the local estimated stage cost.
This is because a larger disagreement among agents can tolerate a larger prediction error, whereas near consensus even a small prediction error can have a relatively large impact on the LQ performance.

Note that as a consequence of \eqref{eq:apriori_estimation} and \eqref{eq:triggering_law}, it holds that
\begin{equation}
  e_i[k] = \begin{cases}
    0 & \text{if $k = t^i_\ell$ for $\ell \in \Z_{\geq 0}$}, \\
    \bar{e}_i[k] & \text{otherwise},
  \end{cases}
\end{equation}
which implies that
\begin{equation} \label{eq:inequality_ek_phi}
  e_i^\top[k] \Omega_i e_i[k] \leq \sigma \hat\phi_i[k-1],\quad \forall k\in\Z_{\geq 1},\ i\in\{1,\dots,N\}.
\end{equation}
We also have
\begin{equation} \label{eq:inequality_phi_sum}
  \sum_{i=1} ^N \hat\phi_i[k] = g_{\mathrm{all}}(\hat{x}[k]),
\end{equation}
where $g_{\mathrm{all}}(x) \coloneqq x^\top S x$ and $S \coloneqq L \otimes Q + c^2L^2\otimes F^\top RF$.

For $i \in \{2,\dots,N\}$, define a matrix
\begin{align}
  \Gamma_i
  &\coloneqq c^2\lambda_i(L)^2 F^{\top}B^\top P_i BF \notag\\
  &\quad + \frac{1}{\varepsilon} c^2\lambda_i(L)^2
  F^{\top}B^\top P_i (A - c\lambda_i(L)BF) \notag\\
  &\quad \times W_i^{-1} (A - c\lambda_i(L)BF)^{\top} P_i BF,
\end{align}
and $ \Gamma_U \coloneqq (U \otimes I_n) \diag(0, \Gamma_2, \dots, \Gamma_N) (U^\top \otimes I_n)$.
We also define
\begin{align}
  &\alpha_S \coloneqq \lambda_{\max}\left(\hat\Omega^{-1/2} S \hat\Omega^{-1/2}\right), \\
  &\alpha_{S_u} \coloneqq \lambda_{\max}\left(\hat\Omega^{-1/2} S_u \hat\Omega^{-1/2}\right), \\
  &\alpha_{\Gamma_U} \coloneqq \lambda_{\max}\left(\hat\Omega^{-1/2} \Gamma_U \hat\Omega^{-1/2} \right),
\end{align}
where $\hat\Omega \coloneqq \diag(\Omega_1,\dots,\Omega_N)$ and $S_u \coloneqq c^2L^2\otimes F^\top RF$.

Then the following lemmas hold, which are used to establish the main results.

\begin{lemma} \label{lem:error_sum_bound}
  Suppose that $\sigma$ satisfies
  \begin{equation} \label{eq:sigma_condition1}
    0 < \sigma < \frac{1}{\alpha_S},
  \end{equation}
  and choose a constant $\eta > 0$ such that
  \begin{equation} \label{eq:eta_condition}
    \eta > \frac{\sigma \alpha_S}{1 - \sigma \alpha_S}.
  \end{equation}
  Then, it holds that
  \begin{equation}
    \sum_{k=0}^{\infty} e^\top[k] \hat\Omega e[k] \leq \beta \sum_{k=0}^{\infty} g_{\mathrm{all}}(x[k]),
  \end{equation}
  where
  \begin{equation} \label{eq:def_beta}
    \beta \coloneqq \frac{\sigma(1 + \eta)}{1 - \sigma(1 + \eta^{-1})\alpha_S}.
  \end{equation}
\end{lemma}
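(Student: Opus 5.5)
Sum the per-agent triggering inequalities over agents and time, then absorb the resulting $\hat{x}$-dependence into $x$ by a Young inequality.

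Summing \eqref{eq:inequality_ek_phi} over $i$ and using \eqref{eq:inequality_phi_sum} gives, for every $k\ge 1$,
\begin{equation}
  e^\top[k]\hat\Omega e[k] \le \sigma\, g_{\mathrm{all}}(\hat{x}[k-1]) .
\end{equation}
At $k=0$ we have $e[0]=0$ because $t^i_0=0$ for all $i$. Summing over $k$ up to a finite horizon $K$ then yields
\begin{equation}
  \sum_{k=0}^{K} e^\top[k]\hat\Omega e[k] \le \sigma \sum_{k=0}^{K-1} g_{\mathrm{all}}(\hat{x}[k]) .
\end{equation}

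Next, substitute $\hat{x}[k]=x[k]+e[k]$. Since $S\succeq 0$, the Cauchy--Schwarz and Young inequalities give, for any $\eta>0$,
\begin{equation}
  g_{\mathrm{all}}(x+e)\le (1+\eta)\,x^\top S x+(1+\eta^{-1})\,e^\top S e .
\end{equation}
By the definition of $\alpha_S$, we have $S\preceq \alpha_S\hat\Omega$. Hence
\begin{equation}
  e^\top S e\le \alpha_S\, e^\top\hat\Omega e .
\end{equation}
Writing $E_K:=\sum_{k=0}^{K} e^\top[k]\hat\Omega e[k]$ and $G_K:=\sum_{k=0}^{K} g_{\mathrm{all}}(x[k])$, and using that $E_{K-1}\le E_K$ and $G_{K-1}\le G_K$ (all terms are nonnegative), we get
\begin{equation}
  E_K \le \sigma(1+\eta)\,G_K+\sigma(1+\eta^{-1})\alpha_S\, E_K .
\end{equation}

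The main point is moving the $E_K$ term to the left, which needs $\sigma(1+\eta^{-1})\alpha_S<1$. Rearranging, this is exactly $\eta>\sigma\alpha_S/(1-\sigma\alpha_S)$, i.e.\ \eqref{eq:eta_condition}, and it requires \eqref{eq:sigma_condition1} for the right-hand side to be positive. Working with finite sums is what makes the rearrangement legitimate: $E_K$ is finite, whereas the infinite sums might diverge a priori. We thus obtain $E_K\le \beta G_K\le \beta\sum_{k=0}^{\infty} g_{\mathrm{all}}(x[k])$, with $\beta$ as in \eqref{eq:def_beta}. Finally, let $K\to\infty$. The left-hand side is monotone, so the bound passes to the limit. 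The inequality holds trivially if the right-hand side is infinite.
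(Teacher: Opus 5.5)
Your proposal is correct and follows essentially the same route as the paper. It uses the same ingredients: the summed triggering bound $e^\top[k]\hat\Omega e[k]\le\sigma g_{\mathrm{all}}(\hat x[k-1])$, the Young inequality with $S\preceq\alpha_S\hat\Omega$, $e[0]=0$ and nonnegativity to align the finite sums, absorption of the error term under the condition on $\eta$, and a monotone limit. The only cosmetic difference is that you sum over time before applying Young's inequality, whereas the paper first derives the one-step recursion $a_k\le\sigma(1+\eta)b_{k-1}+\sigma(1+\eta^{-1})\alpha_S a_{k-1}$ and then sums it.
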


\begin{proof}
  For any $x,e \in \mathbb{R}^{Nn}$, Young's inequality gives
  \begin{align}
    g_{\mathrm{all}}(x+e)
    &= (x+e)^\top S (x+e) \\
    &\leq (1+\eta)x^\top Sx + \left(1+\eta^{-1}\right)e^\top Se.
    \label{eq:lemma2_young}
  \end{align}
  By the definition of $\alpha_S$, and since $\hat{\Omega}^{-1/2} S \hat{\Omega}^{-1/2}$ is symmetric, it follows that $S \preceq \alpha_S \hat{\Omega}$.
  Hence,
  \begin{equation}
    e^\top Se \leq \alpha_S e^\top \hat{\Omega}e.
    \label{eq:lemma2_S_bound}
  \end{equation}
  Combining \eqref{eq:lemma2_young} and \eqref{eq:lemma2_S_bound}, we obtain
  \begin{equation}
    g_{\mathrm{all}}(x+e)
    \leq (1+\eta)g_{\mathrm{all}}(x)
    + \left(1+\eta^{-1}\right)\alpha_S e^\top \hat{\Omega}e.
    \label{eq:lemma2_gall_bound}
  \end{equation}
  Moreover, \eqref{eq:triggering_law} and \eqref{eq:inequality_phi_sum} imply
  \begin{equation} \label{eq:lemma2_trigger_sum}
    e^\top[k] \hat\Omega e[k] \leq \sigma g_{\mathrm{all}}(\hat{x}[k-1]),\quad k\geq 1.
  \end{equation}
  Since $\hat{x}[k-1] = x[k-1] + e[k-1]$, it follows from \eqref{eq:lemma2_gall_bound} and \eqref{eq:lemma2_trigger_sum} that
  \begin{align}
    e^\top[k] \hat{\Omega} e[k]
    &\leq \sigma g_{\mathrm{all}}(x[k-1] + e[k-1]) \\
    &\leq \sigma(1+\eta) g_{\mathrm{all}}(x[k-1]) \\
    &\quad + \sigma \left(1+\eta^{-1}\right)\alpha_S
    e^\top[k-1] \hat{\Omega} e[k-1]
    \label{eq:lemma2_recursion}
  \end{align}
  for $k \geq 1$.

  Now define $a_k \coloneqq e^\top[k] \hat{\Omega} e[k]$ and $b_k \coloneqq g_{\mathrm{all}}(x[k])$.
  Then \eqref{eq:lemma2_recursion} becomes
  \begin{equation}
    a_k \leq \sigma(1+\eta)b_{k-1}
    + \sigma\left(1+\eta^{-1}\right)\alpha_S a_{k-1},
    \quad k \geq 1.
    \label{eq:lemma2_akbk}
  \end{equation}
  Let $T \geq 2$.
  Summing \eqref{eq:lemma2_akbk} from $k=1$ to $k=T-1$ yields
  \begin{align}
    \sum_{k=1}^{T-1} a_k
    &\leq \sigma(1+\eta)\sum_{k=0}^{T-2} b_k
    + \sigma\left(1+\eta^{-1}\right)\alpha_S \sum_{k=0}^{T-2} a_k.
  \end{align}
  Since $\hat{x}_i[0]=x_i[0]$ for all $i$, we have $e[0]=0$, and therefore $a_0 = 0$.
  It also holds that $a_k \geq 0$ and $b_k \geq 0$ for all $k$ by the definitions.
  Hence,
  \begin{align}
    \sum_{k=0}^{T-1} a_k
    &\leq \sigma(1+\eta)\sum_{k=0}^{T-1} b_k
    + \sigma\left(1+\eta^{-1}\right)\alpha_S \sum_{k=0}^{T-1} a_k.
  \end{align}
  Since the condition \eqref{eq:eta_condition} is equivalent to
  \begin{equation}
    1 - \sigma \left( 1 + \eta^{-1} \right) \alpha_S >0,
    \label{eq:lemma2_gain_condition}
  \end{equation}
  we have
  \begin{equation} \label{eq:lemma2_inequality_finite}
    \sum_{k=0}^{T-1} a_k
    \leq
    \frac{\sigma(1+\eta)}
    {1-\sigma\left(1+\eta^{-1}\right)\alpha_S}
    \sum_{k=0}^{T-1} b_k
    =
    \beta \sum_{k=0}^{T-1} g_{\mathrm{all}}(x[k]).
  \end{equation}
  Finally, since $\sum_{k=0}^{T-1} a_k$ and $\sum_{k=0}^{T-1} b_k$ are monotone nondecreasing in $T$, taking $T \to \infty$ yields
  \begin{equation}
    \sum_{k=0}^{\infty} e^\top[k] \hat{\Omega} e[k]
    \leq
    \beta \sum_{k=0}^{\infty} g_{\mathrm{all}}(x[k]),
  \end{equation}
  which completes the proof.
\end{proof}


  

\begin{lemma} \label{lem:sum_gall_bound}
  Let $\beta$ be the constant given in \eqref{eq:def_beta}, and $\varepsilon \in (0,1)$.
  If it holds that
  \begin{equation} \label{eq:sigma_condition2}
    1 - \varepsilon - \alpha_{\Gamma_U} \beta > 0,
  \end{equation}
  then
  \begin{equation} \label{eq:def_gamma}
    \gamma \coloneqq \frac{1}{1-\varepsilon-\alpha_{\Gamma_U}\beta}
  \end{equation}
  is well-defined and satisfies
  \begin{equation} \label{eq:gall_sum_bound}
    \sum_{k=0}^\infty g_{\mathrm{all}}(x[k])
    \leq
    \gamma J_{\mathrm{all}}(x[0]).
  \end{equation}
\end{lemma}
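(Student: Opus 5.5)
The plan is a modal Lyapunov argument. I will write the closed loop under the event-triggered controller \eqref{eq:event_triggered_controller} as a perturbation of the baseline closed loop \eqref{eq:closed_loop_system}, in the modal coordinates of Section~\ref{sec:standard_consensus}. Then I will use the matrices $P_i$ of Corollary~\ref{cor:lq_cost_alltime} as Lyapunov functions and absorb the error terms through Lemma~\ref{lem:error_sum_bound}.

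\textbf{Closed loop in modal coordinates.} Since $\hat{x}=x+e$, the input is $u[k]=-(cL\otimes F)(x[k]+e[k])$. Hence
\begin{equation}
x[k+1]=(I_N\otimes A-cL\otimes BF)x[k]-(cL\otimes BF)e[k].
\end{equation}
Let $\tilde{e}[k]\coloneqq(U^\top\otimes I_n)e[k]$, with blocks $\tilde{e}_i[k]$. Then, for $i=2,\dots,N$,
\begin{equation}
\tilde{x}_i[k+1]=\tilde{A}_i\tilde{x}_i[k]-c\lambda_i(L)BF\tilde{e}_i[k],\qquad \tilde{A}_i\coloneqq A-c\lambda_i(L)BF.
\end{equation}
The same diagonalization as in the proof of Corollary~\ref{cor:lq_cost_alltime}, now applied to $S$, gives
\begin{equation}
g_{\mathrm{all}}(x[k])=\sum_{i=2}^N\tilde{x}_i^\top[k]W_i\tilde{x}_i[k].
\end{equation}

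\textbf{One-step Lyapunov bound.} Take $V_i[k]\coloneqq\tilde{x}_i^\top[k]P_i\tilde{x}_i[k]$ and expand using \eqref{eq:lyapunov_eq_Pi}:
\begin{equation}
V_i[k+1]-V_i[k]=-\tilde{x}_i^\top W_i\tilde{x}_i-2c\lambda_i(L)\,\tilde{x}_i^\top\tilde{A}_i^\top P_iBF\tilde{e}_i+c^2\lambda_i(L)^2\,\tilde{e}_i^\top F^\top B^\top P_iBF\tilde{e}_i .
\end{equation}
For the cross term I use Young's inequality weighted by $W_i\succ0$:
\begin{equation}
2a^\top b\le\varepsilon\, a^\top W_i a+\varepsilon^{-1}\, b^\top W_i^{-1}b,
\end{equation}
with $a=\tilde{x}_i$ and $b=c\lambda_i(L)\tilde{A}_i^\top P_iBF\tilde{e}_i$. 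This yields exactly
\begin{equation}
V_i[k+1]-V_i[k]\le-(1-\varepsilon)\,\tilde{x}_i^\top[k]W_i\tilde{x}_i[k]+\tilde{e}_i^\top[k]\Gamma_i\tilde{e}_i[k].
\end{equation}
Summing over $i$ gives the error term $\sum_{i=2}^N\tilde{e}_i^\top\Gamma_i\tilde{e}_i=e^\top\Gamma_U e$. By the definition of $\alpha_{\Gamma_U}$, which works the same way as the argument for $\alpha_S$ in Lemma~\ref{lem:error_sum_bound}, this is at most $\alpha_{\Gamma_U}\,e^\top\hat\Omega e$.

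\textbf{Finite-horizon summation.} To avoid assuming a priori that $\sum_k g_{\mathrm{all}}(x[k])$ is finite, I work on a finite horizon $T$. I sum the one-step bound from $k=0$ to $T-1$, telescope, and drop $V_i[T]\ge0$. I use $\sum_i V_i[0]=J_{\mathrm{all}}(x[0])$ from Corollary~\ref{cor:lq_cost_alltime}. Finally I invoke the finite-horizon inequality \eqref{eq:lemma2_inequality_finite} from the proof of Lemma~\ref{lem:error_sum_bound}, which holds for every $T$. Together these give
\begin{equation}
(1-\varepsilon)\sum_{k=0}^{T-1}g_{\mathrm{all}}(x[k])\le J_{\mathrm{all}}(x[0])+\alpha_{\Gamma_U}\beta\sum_{k=0}^{T-1}g_{\mathrm{all}}(x[k]).
\end{equation}
By \eqref{eq:sigma_condition2}, $\gamma$ is well defined and positive. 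Therefore $\sum_{k=0}^{T-1}g_{\mathrm{all}}(x[k])\le\gamma J_{\mathrm{all}}(x[0])$ uniformly in $T$, and monotone convergence as $T\to\infty$ gives \eqref{eq:gall_sum_bound}.

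\textbf{Main obstacle.} The delicate points are choosing the $W_i$-weighted Young inequality so that the resulting error matrix matches $\Gamma_i$ exactly, and handling the horizon correctly. Lemma~\ref{lem:error_sum_bound} is only stated for infinite sums, so I must use its finite-$T$ version to move the $\beta$-term to the left-hand side without presupposing finiteness.
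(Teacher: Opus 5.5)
Your proposal is correct and follows the paper's proof essentially step for step. It uses the same Lyapunov function $V=\sum_{i=2}^N\tilde{x}_i^\top P_i\tilde{x}_i$ with $V(x[0])=J_{\mathrm{all}}(x[0])$, the same $W_i$-weighted Young inequality that produces $\Gamma_i$, the bound $e^\top\Gamma_U e\le\alpha_{\Gamma_U}e^\top\hat\Omega e$, and the finite-horizon inequality \eqref{eq:lemma2_inequality_finite} before letting $T\to\infty$. One implicit step (also left implicit in the paper): multiplying \eqref{eq:lemma2_inequality_finite} by $\alpha_{\Gamma_U}$ requires $\alpha_{\Gamma_U}\ge 0$, which holds because each $\Gamma_i\succeq 0$.
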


\begin{proof}
  Define the function
  \begin{equation}
    V(x[k]) \coloneqq \sum_{i=2}^N \tilde{x}_i^\top[k] P_i \tilde{x}_i[k].
  \end{equation}
  By Corollary~\ref{cor:lq_cost_alltime}, we have
  \begin{equation}
    V(x[0]) = J_{\mathrm{all}}(x[0]).
  \end{equation}

  We first show that, for all $k \in \Z_{\geq 0}$ and $\varepsilon \in (0,1)$, it holds that
  \begin{equation}
    V(x[k+1]) - V(x[k])
    \leq -(1-\varepsilon) g_{\mathrm{all}}(x[k]) + e^\top[k] \Gamma_U e[k].
    \label{eq:one_step_inequality}
  \end{equation}
  For each $i \in \{2,\dots,N\}$, let
  \begin{equation}
    \Delta V_i
    \coloneqq
    \tilde{x}_i^\top[k+1] P_i \tilde{x}_i[k+1] - \tilde{x}_i^\top[k] P_i \tilde{x}_i[k].
  \end{equation}
  Under the event-triggered controller \eqref{eq:event_triggered_controller} with \eqref{eq:triggering_law}, the disagreement dynamics is expressed as
  \begin{align} \label{eq:disagreement_etc}
    \tilde{x}_i[k+1]
    &= (A - c\lambda_i(L)BF)\tilde{x}_i[k] \\
    &\quad - c\lambda_i(L)BF\,\tilde{e}_i[k], \quad i \in \{2,\dots,N\},
  \end{align}
  where $\tilde{e}[k] \coloneqq (U^\top \otimes I_n)e[k]$.
  Using \eqref{eq:lyapunov_eq_Pi} and \eqref{eq:disagreement_etc}, we obtain
  \begin{align}
    \Delta V_i
    &= - \tilde{x}_i^\top[k] W_i \tilde{x}_i[k] \\
    &\quad - 2\tilde{x}_i^\top[k] (A - c\lambda_i(L)BF)^\top P_i (c\lambda_i(L)BF)\tilde{e}_i[k] \\
    &\quad + c^2\lambda_i(L)^2 \tilde{e}_i^\top[k] F^\top B^\top P_i B F \tilde{e}_i[k].
  \end{align}
  By applying Young's inequality, for any $\varepsilon \in (0,1)$, it holds that
  \begin{align}
    & 2\left(-\tilde{x}_i[k]\right)^\top (A - c\lambda_i(L)BF)^\top P_i (c\lambda_i(L)BF)\tilde{e}_i[k] \\
    &\leq
    \varepsilon \tilde{x}_i^\top[k] W_i \tilde{x}_i[k] \\
    &\quad + \frac{c^2\lambda_i(L)^2}{\varepsilon} \tilde{e}_i^\top[k] F^\top B^\top P_i (A - c\lambda_i(L)BF) W_i^{-1} \\
    &\qquad \times (A - c\lambda_i(L)BF)^\top P_i B F \tilde{e}_i[k].
  \end{align}
  Hence, by the definition of $\Gamma_i$, we have
  \begin{equation}
    \Delta V_i
    \le
    -
    (1-\varepsilon)\tilde{x}_i^\top[k] W_i \tilde{x}_i[k]
    +
    \tilde{e}_i^\top[k] \Gamma_i \tilde{e}_i[k].
  \end{equation}
  Summing this inequality over $i=2,\dots,N$, and using the definitions of $g_{\mathrm{all}}$ and $\Gamma_U$, we obtain \eqref{eq:one_step_inequality}.

  Moreover, by the definition of $\alpha_{\Gamma_U}$ and the symmetry of
  $\hat{\Omega}^{-1/2}\Gamma_U\hat{\Omega}^{-1/2}$, we have
  \begin{equation} \label{eq:quad_GammaU}
    e^\top[k] \Gamma_U e[k]
    \leq
    \alpha_{\Gamma_U} e^\top[k] \hat{\Omega} e[k].
  \end{equation}
  Therefore, \eqref{eq:one_step_inequality} and \eqref{eq:quad_GammaU} give
  \begin{equation}
    V(x[k+1]) - V(x[k])
    \leq
    -(1-\varepsilon) g_{\mathrm{all}}(x[k]) + \alpha_{\Gamma_U} e^\top[k] \hat{\Omega} e[k].
  \end{equation}

  Let $T \ge 1$.
  Summing the above inequality from $k=0$ to $k=T-1$ yields
  \begin{align}
    V(x[T]) - V(x[0])
    &\leq
    -(1-\varepsilon)\sum_{k=0}^{T-1} g_{\mathrm{all}}(x[k]) \\
    &\quad +
    \alpha_{\Gamma_U}\sum_{k=0}^{T-1} e^\top[k] \hat{\Omega} e[k].
  \end{align}
  From \eqref{eq:lemma2_inequality_finite}, it follows that
  \begin{equation}
    V(x[T]) - V(x[0])
    \leq
    -
    (1-\varepsilon-\alpha_{\Gamma_U}\beta)
    \sum_{k=0}^{T-1} g_{\mathrm{all}}(x[k]).
  \end{equation}
  Since $V(x[T]) \ge 0$, we obtain
  \begin{equation}
    (1-\varepsilon-\alpha_{\Gamma_U}\beta)
    \sum_{k=0}^{T-1} g_{\mathrm{all}}(x[k])
    \le
    V(x[0]).
  \end{equation}
  Under the assumption \eqref{eq:sigma_condition2}, this implies
  \begin{equation}
    \sum_{k=0}^{T-1} g_{\mathrm{all}}(x[k])
    \leq
    \gamma V(x[0]).
  \end{equation}

  Finally, since $g_{\mathrm{all}}(x[k]) \geq 0$ for all $k$, $\sum_{k=0}^{T-1} g_{\mathrm{all}}(x[k])$ is nondecreasing in $T$ and upper bounded.
  Therefore, by taking $T \to \infty$, we obtain
  \begin{equation}
    \sum_{k=0}^{\infty} g_{\mathrm{all}}(x[k])
    \leq
    \gamma V(x[0])
    =
    \gamma J_{\mathrm{all}}(x[0]),
  \end{equation}
  which completes the proof.
\end{proof}

Then the following theorem establishes a sufficient condition to guarantee \eqref{eq:lq_performance_constraint} for a given $\rho > 1$.

\begin{theorem} \label{th:lq_performance_guarantee}
  For $\beta$ and $\gamma$ given in \eqref{eq:def_beta} and \eqref{eq:def_gamma}, respectively, define
  \begin{equation} \label{eq:def_rhostar}
    \hat\rho \coloneqq \left( 1+\delta +\left(1 + \delta^{-1}\right)\alpha_{S_u} \beta \right)\gamma,
  \end{equation}
  where $\delta > 0$.
  Then, for a given $\rho > 1$, if $\hat\rho \leq \rho$ holds, we have
  \begin{equation} \label{eq:lq_performance_guaratee}
    J_{\mathrm{etc}}(x[0]) \leq \rho J_{\mathrm{all}}(x[0]),\quad \forall x[0] \in \R^{Nn}
  \end{equation}
  under \eqref{eq:event_triggered_controller} and \eqref{eq:triggering_law}.
\end{theorem}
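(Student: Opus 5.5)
The plan is to write $J_{\mathrm{etc}}(x[0])$ as a state part plus an input part, bound the input part through $\hat{x}=x+e$ with Young's inequality, and then apply Lemma~\ref{lem:error_sum_bound} and Lemma~\ref{lem:sum_gall_bound} in turn. Implicitly we assume that $\sigma$, $\eta$, $\varepsilon$ satisfy \eqref{eq:sigma_condition1}, \eqref{eq:eta_condition}, \eqref{eq:sigma_condition2}, since otherwise $\beta$ and $\gamma$ are not well defined.

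Under \eqref{eq:event_triggered_controller} we have $u[k]=-(cL\otimes F)\hat{x}[k]$. Hence
\begin{equation}
  J_{\mathrm{etc}}(x[0])=\sum_{k=0}^\infty \Bigl( x^\top[k](L\otimes Q)x[k] + \hat{x}^\top[k] S_u \hat{x}[k] \Bigr).
\end{equation}
Substituting $\hat{x}[k]=x[k]+e[k]$ and using Young's inequality with parameter $\delta>0$, exactly as in \eqref{eq:lemma2_young}, gives
\begin{equation}
  \hat{x}^\top[k] S_u \hat{x}[k] \le (1+\delta)x^\top[k]S_u x[k] + (1+\delta^{-1})e^\top[k]S_u e[k].
\end{equation}
Since $\hat\Omega^{-1/2}S_u\hat\Omega^{-1/2}$ is symmetric, the definition of $\alpha_{S_u}$ yields $S_u\preceq \alpha_{S_u}\hat\Omega$. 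Also $L\otimes Q\succeq 0$ and $\delta>0$, so $x^\top(L\otimes Q)x\le (1+\delta)x^\top(L\otimes Q)x$. Recalling $S=L\otimes Q+S_u$, the stage cost at step $k$ is therefore bounded by
\begin{equation}
  (1+\delta)\,g_{\mathrm{all}}(x[k]) + (1+\delta^{-1})\alpha_{S_u}\, e^\top[k]\hat\Omega e[k].
\end{equation}

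We sum this bound over $k$, working with finite horizons $T$ and then letting $T\to\infty$ via monotonicity, as in the earlier lemmas. Lemma~\ref{lem:error_sum_bound} bounds the error sum by $\beta\sum_k g_{\mathrm{all}}(x[k])$, which gives
\begin{equation}
  J_{\mathrm{etc}}(x[0]) \le \bigl(1+\delta+(1+\delta^{-1})\alpha_{S_u}\beta\bigr)\sum_{k=0}^\infty g_{\mathrm{all}}(x[k]).
\end{equation}
Lemma~\ref{lem:sum_gall_bound} then bounds the remaining sum by $\gamma J_{\mathrm{all}}(x[0])$. This yields $J_{\mathrm{etc}}(x[0])\le\hat\rho J_{\mathrm{all}}(x[0])$, and $\hat\rho\le\rho$ gives \eqref{eq:lq_performance_guaratee} for every $x[0]$.

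Each step is routine. The only point needing care is that the input weight in $J_{\mathrm{etc}}$ is evaluated at $\hat{x}$, not $x$, while Lemma~\ref{lem:sum_gall_bound} controls sums of $g_{\mathrm{all}}$ at the true state $x$. This is why the Young split is applied only to the $S_u$ part, which produces the factor $\alpha_{S_u}$ rather than $\alpha_S$. Finiteness of all the series follows from Lemma~\ref{lem:sum_gall_bound}, so the limit passage is justified.
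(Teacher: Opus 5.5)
Your proposal is correct and follows essentially the same route as the paper: write the input term of $J_{\mathrm{etc}}$ at $\hat{x}=x+e$, apply Young's inequality with parameter $\delta$ only to the $S_u$ part, bound $e^\top S_u e\le\alpha_{S_u}e^\top\hat\Omega e$, and then invoke Lemma~\ref{lem:error_sum_bound} followed by Lemma~\ref{lem:sum_gall_bound}. The differences are only cosmetic: you absorb the $L\otimes Q$ term using $1\le 1+\delta$ and apply the $\alpha_{S_u}$ bound before summing, whereas the paper uses $x^\top S_u x\le g_{\mathrm{all}}(x)$ and applies that bound after summing.
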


\begin{proof}
  Let us define
  \begin{equation} \label{eq:def_g_etc}
    g_{\mathrm{etc}}(x,e) \coloneqq x^\top (L \otimes Q)x + (x+e)^\top S_u (x+e).
  \end{equation}
  To prove this theorem, it suffices to show that $\sum_k g_{\mathrm{etc}}(x[k])$ is bounded in terms of $\sum_k g_{\mathrm{all}}(x[k])$ under \eqref{eq:event_triggered_controller} and \eqref{eq:triggering_law}.

  Since $L \otimes Q \succeq 0$, we have
  \begin{align}
    x^\top[k] S_u x[k] &\leq x^\top[k] (L \otimes Q)x[k] + x^\top[k] S_u x[k] \\
    &= g_{\mathrm{all}}(x[k]) \label{eq:inequality_x-Su-x}
  \end{align}
  for $x[k]\in \R^n$.
  Applying \eqref{eq:inequality_x-Su-x} and Young's inequality to \eqref{eq:def_g_etc} yields
  \begin{align}
    g_{\mathrm{etc}}(x[k],e[k]) &\leq (1 + \delta) g_{\mathrm{all}}(x[k]) \\
    &\quad + (1 + \delta^{-1}) e^\top[k] S_u e[k].
  \end{align}
  By summing up for $k=0,1,\dots$, we obtain
  \begin{align}
    J_{\mathrm{etc}}(x[0]) &\leq (1 + \delta) \sum_{k=0}^{\infty} g_{\mathrm{all}}(x[k]) \\
    &\quad + (1 + \delta^{-1}) \sum_{k=0}^{\infty} e^\top[k] S_u e[k].
  \end{align}
  Moreover, since $\hat\Omega^{-1/2} S_u \hat\Omega^{-1/2}$ is symmetric, it holds that $e^\top[k] S_u e[k] \leq \alpha_{S_u} e^\top[k] \hat\Omega e[k]$ \cite{boyd2004convex}.
  Thus,
  \begin{align}
    J_{\mathrm{etc}}(x[0]) &\leq (1 + \delta) \sum_{k=0}^{\infty} g_{\mathrm{all}}(x[k]) \\
    &\quad + (1 + \delta^{-1}) \alpha_{S_u} \sum_{k=0}^{\infty}  e^\top[k] \hat\Omega e[k].
  \end{align}
  Using Lemmas~\ref{lem:error_sum_bound} and \ref{lem:sum_gall_bound}, we obtain
  \begin{equation}
    J_{\mathrm{etc}}(x[0]) \leq \hat\rho J_{\mathrm{all}}(x[0])
  \end{equation}
  for all $x[0] \in \R^{Nn}$.
  Hence, if $\hat\rho \leq \rho$ holds, we have \eqref{eq:lq_performance_guaratee}.
\end{proof}

\begin{remark}
  Although the performance constraint in \eqref{eq:lq_performance_constraint} is posed for $\rho \geq 1$, Theorem~\ref{th:lq_performance_guarantee} implies that, for any $\rho > 1$, the design parameters can be chosen so that $\hat\rho \leq \rho$ holds.
  Indeed, for any admissible choice of $\sigma$, $\delta$, and $\varepsilon$, one has $\hat\rho > 1$.
  On the other hand, by choosing $\delta > 0$ and $\varepsilon \in (0,1)$ sufficiently small, and then taking $\sigma > 0$ sufficiently small, we can make $\hat\rho$ arbitrarily close to $1$.
\end{remark}

As a consequence of Theorem~\ref{th:lq_performance_guarantee}, the multi-agent system achieves consensus while meeting the performance constraint \eqref{eq:lq_performance_constraint}.

\begin{theorem} \label{th:consensus}
  Let $\rho > 1$ be given.
  Suppose that $\{\Omega_i\}_{i=1}^N$, $\sigma$, $\delta$, $\varepsilon$, and $\eta$
  are chosen so that $\hat\rho \leq \rho$ holds.
  Then, under \eqref{eq:event_triggered_controller} and \eqref{eq:triggering_law}, the multi-agent system \eqref{eq:mas} reaches consensus.
\end{theorem}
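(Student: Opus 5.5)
The plan is to obtain consensus from summability of the disagreement cost, using Lemma~\ref{lem:sum_gall_bound}. The hypothesis $\hat\rho\le\rho$ presupposes that $\gamma$ and $\beta$ are well defined and positive, i.e., that \eqref{eq:sigma_condition1}, \eqref{eq:eta_condition} and \eqref{eq:sigma_condition2} hold. Hence Lemmas~\ref{lem:error_sum_bound} and \ref{lem:sum_gall_bound} apply and give
\begin{equation}
  \sum_{k=0}^\infty g_{\mathrm{all}}(x[k]) \le \gamma J_{\mathrm{all}}(x[0]) < \infty .
\end{equation}
(Alternatively, Theorem~\ref{th:lq_performance_guarantee} gives $J_{\mathrm{etc}}(x[0])<\infty$, whose first term is $\sum_k x^\top[k](L\otimes Q)x[k]$.) Either way, the terms of a convergent series of nonnegative numbers tend to zero, so $x^\top[k](L\otimes Q)x[k]\to 0$ as $k\to\infty$.

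Next I will translate this into the disagreement coordinates. With $\tilde{x}[k]=(U^\top\otimes I_n)x[k]$ we have
\begin{equation}
  x^\top[k](L\otimes Q)x[k]=\sum_{i=2}^N \lambda_i(L)\,\tilde{x}_i^\top[k] Q\tilde{x}_i[k]
  \ge \lambda_2(L)\lambda_{\min}(Q)\sum_{i=2}^N\|\tilde{x}_i[k]\|^2 .
\end{equation}
Since $\lambda_2(L)>0$ (the graph is connected) and $Q\succ 0$, this gives $\tilde{x}_i[k]\to 0$ for every $i=2,\dots,N$.

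Finally, $x[k]-(\tfrac{1}{\sqrt N}\mathbf{1}_N\otimes I_n)\tilde{x}_1[k]=\sum_{i\ge2}(v_i\otimes I_n)\tilde{x}_i[k]\to0$. Every agent therefore approaches the common vector $\tfrac{1}{\sqrt N}\tilde{x}_1[k]$, so $\|x_i[k]-x_j[k]\|\to0$ for all $i,j$, which is \eqref{eq:def_consensus}. A more direct route is to write $\sum_{(i,j)}a_{ij}(x_i-x_j)^\top Q(x_i-x_j)=2x^\top(L\otimes Q)x$, which yields $x_i-x_j\to0$ along every edge; connectedness then extends this to all pairs along paths.

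The main obstacle is not the limit argument, which is routine, but the implicit well-posedness of the hypothesis. One must check that $\hat\rho\le\rho$ is meant together with \eqref{eq:sigma_condition1}, \eqref{eq:eta_condition}, $\varepsilon\in(0,1)$ and \eqref{eq:sigma_condition2}, so that $\gamma>0$ and the finite-sum bounds in the lemmas' proofs hold for every $T$. No Schur stability of the closed loop under event-triggering is needed. Only the summability of the stage cost, which follows from the Lyapunov decrease \eqref{eq:one_step_inequality} combined with Lemma~\ref{lem:error_sum_bound}, is used. The possibly unstable agreement mode $\tilde{x}_1$ is harmless, since consensus only requires the disagreement modes to vanish.
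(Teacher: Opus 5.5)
Your proposal is correct and follows essentially the paper's argument: summability of $x^\top[k](L\otimes Q)x[k]$ forces $\tilde{x}_i[k]\to 0$ for $i=2,\dots,N$. The paper obtains that summability from $J_{\mathrm{etc}}(x[0])\le\rho J_{\mathrm{all}}(x[0])<\infty$ via Theorem~\ref{th:lq_performance_guarantee}, which is your alternative route; your primary route through Lemma~\ref{lem:sum_gall_bound} together with $S\succeq L\otimes Q$ is an equally valid and slightly more direct shortcut. Your explicit step from vanishing disagreement modes to $\|x_i[k]-x_j[k]\|\to 0$ is something the paper leaves implicit. So is your observation that \eqref{eq:sigma_condition1}, \eqref{eq:eta_condition}, $\varepsilon\in(0,1)$ and \eqref{eq:sigma_condition2} must be read as part of the hypothesis $\hat\rho\le\rho$, since otherwise $\gamma$ could be negative and that inequality would be vacuous.
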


\begin{proof}
  By Corollary~\ref{cor:lq_cost_alltime}, under the feedback gain $F$ in \eqref{eq:feedback_gain}
  and the coupling gain $c$ satisfying \eqref{eq:condition_c}, $J_{\mathrm{all}}(x[0])$ is finite for every $x[0] \in \mathbb{R}^{Nn}$.
  Since $\hat\rho \leq \rho$, Theorem~\ref{th:lq_performance_guarantee} yields
  \begin{equation}
    J_{\mathrm{etc}}(x[0]) \leq \rho J_{\mathrm{all}}(x[0]) < \infty,
    \quad \forall x[0] \in \mathbb{R}^{Nn}.
  \end{equation}
  Hence,
  \begin{equation} \label{eq:finite_cost}
    \sum_{k=0}^{\infty} x^\top[k](L \otimes Q)x[k] < \infty,
  \end{equation}
  where we use $x^\top[k](L \otimes Q)x[k] \leq g_{\mathrm{etc}}(x[k],e[k])$ for all $k$.

  Recalling that $\tilde{x}[k] = (U^\top \otimes I_n)x[k]$ and $U^\top L U = \Lambda$, we have
  \begin{align}
    x^\top[k](L \otimes Q)x[k]
    &= \tilde{x}^\top[k](\Lambda \otimes Q)\tilde{x}[k] \\
    &= \sum_{i=2}^N \lambda_i(L)\tilde{x}_i^\top[k]Q\tilde{x}_i[k] \label{eq:cost_tilde_x}.
  \end{align}
  Combining \eqref{eq:finite_cost} and \eqref{eq:cost_tilde_x} yields
  \begin{equation}
    \sum_{k=0}^{\infty}\sum_{i=2}^N \lambda_i(L)\tilde{x}_i^\top[k]Q\tilde{x}_i[k] < \infty.
  \end{equation}
  Since the graph is connected, we have $\lambda_i(L) > 0$ for all $i\in\{2,\dots,N\}$, and since $Q \succ 0$, it follows that
  \begin{equation}
    \sum_{k=0}^{\infty} \|\tilde{x}_i[k]\|^2 < \infty, \quad \forall i \in \{2,\dots,N\}.
  \end{equation}
  Therefore, we obtain
  \begin{equation}
    \lim_{k \to \infty} \tilde{x}_i[k] = 0,\quad \forall i \in \{2,\dots,N\},
  \end{equation}
  which completes the proof.
\end{proof}

In the next section, we provide a design method for $\{\Omega_i\}_{i=1}^N$, $\sigma$, $\delta$, $\varepsilon$, and $\eta$ so as to satisfy $\hat\rho \leq \rho$ for a fixed $\rho > 1$.

\section{Parameter Design}
\label{sec:parameter}

For a given performance level $\rho > 1$, we explain how to choose $\{\Omega_i\}_{i=1}^N$, $\eta$, $\delta$, $\sigma$, and $\varepsilon$ so that $\hat{\rho} \le \rho$ holds.
Although the condition $\hat{\rho} \le \rho$ is satisfied with a sufficiently small $\sigma$, such a choice would result in frequent transmissions, which is not a desirable behavior.
Therefore, the main idea is to make $\sigma$ as large as possible while meeting the performance constraint, so that the triggering condition \eqref{eq:triggering_law} is less likely to be violated, resulting in fewer transmissions.
The design procedure described below is carried out offline.

However, increasing $\sigma$ alone does not necessarily reduce the number of transmissions, since whether to transmit depends on the relative scale of $\sigma$ and $\{\Omega_i\}_{i=1}^N$.
Even if $\sigma$ is large, the number of transmissions may not decrease if $\Omega_i$ is scaled accordingly.
To fix the scale of $\{\Omega_i\}_{i=1}^N$, we impose $\sum_{i=1}^N \tr(\Omega_i) = 1$,
and then seek to make $\sigma$ as large as possible under this normalization.

Since directly maximizing $\sigma$ over all design parameters is intractable, we design the parameters sequentially.
For a fixed $\varepsilon$, we first design $\{\Omega_i\}_{i=1}^N$.
Next, for the resulting $\{\Omega_i\}_{i=1}^N$ and a given $\sigma$, we derive closed-form expressions for $\eta$ and $\delta$ that minimize $\hat{\rho}$.
Then we solve a maximization problem for $\sigma$.
Finally, we conduct a grid search over $\varepsilon$ and select the value that yields the largest feasible $\sigma$.

We begin with the design of $\{\Omega_i\}_{i=1}^N$ for a fixed $\varepsilon$.
By the definitions of $\alpha_S$, $\alpha_{\Gamma_U}$, and $\alpha_{S_u}$, for a positive scalar $\kappa$, the inequalities
\begin{equation}
  \alpha_S \leq \kappa,\quad
  \alpha_{S_u} \leq \kappa,\quad
  \alpha_{\Gamma_U} \leq \kappa
\end{equation}
are respectively equivalent to
\begin{equation}
  S \preceq \kappa \hat{\Omega},\quad
  S_u \preceq \kappa \hat{\Omega},\quad
  \Gamma_U \preceq \kappa \hat{\Omega}.
\end{equation}
Hence, for each fixed $\varepsilon$, we consider the problem of finding the smallest such upper bound $\kappa$:
\begin{mini}|s|
  {\Omega_1,\ldots,\Omega_N}{\kappa}{}{\label{prob:kappa_original}}
  \addConstraint{S \preceq \kappa \hat{\Omega},\ S_u \preceq \kappa \hat{\Omega},\ \Gamma_U \preceq \kappa \hat{\Omega}}
  \addConstraint{\hat{\Omega} = \diag(\Omega_1,\dots,\Omega_N)}
  \addConstraint{\Omega_i \succ 0,\quad i \in \{1,\dots,N\}}
  \addConstraint{\sum_{i=1}^N \tr(\Omega_i) = 1.}
\end{mini}
However, Problem~\eqref{prob:kappa_original} is not directly tractable, since the constraints contain bilinear matrix inequalities through the term $\kappa \hat{\Omega}$.
To remove this bilinearity, we introduce new variables
\begin{equation}
  X_i \coloneqq \kappa \Omega_i,
  \quad
  i \in \{1,\dots,N\},
\end{equation}
and define $\hat{X} \coloneqq \diag(X_1,\ldots,X_N)$.
Then, it holds that $\hat{X} = \kappa \hat{\Omega}$.
Moreover, since $\sum_{i=1}^N \tr(\Omega_i) = 1$, we have
\begin{equation}
  \sum_{i=1}^N \tr(X_i)
  =
  \kappa \sum_{i=1}^N \tr(\Omega_i)
  =
  \kappa.
\end{equation}
Therefore, for each fixed $\varepsilon$, Problem~\eqref{prob:kappa_original} can be transformed into
\begin{mini}|s|
  {X_1,\ldots,X_N}{\sum_{i=1}^N \tr(X_i)}{}{\label{prob:kappa_convex}}
  \addConstraint{S \preceq \hat{X},\ S_u \preceq \hat{X},\ \Gamma_U \preceq \hat{X}}
  \addConstraint{\hat{X} = \operatorname{diag}(X_1,\ldots,X_N)}
  \addConstraint{X_i \succ 0,\quad i \in \{1,\ldots,N\}.}
\end{mini}
Problem~\eqref{prob:kappa_convex} is a semidefinite program and can be solved using standard solvers, such as CVX \cite{grant2014cvx}.
Let $\{X_i^\star(\varepsilon)\}_{i=1}^N$ be an optimal solution to Problem~\eqref{prob:kappa_convex}.
Then the optimal value of Problem~\eqref{prob:kappa_original} is
\begin{equation}
  \kappa^\star(\varepsilon) = \sum_{i=1}^N \tr(X_i^\star(\varepsilon)),
\end{equation}
and an optimal solution is given by
\begin{equation} \label{eq:Omega_i_star}
  \Omega_i^\star(\varepsilon) = \frac{X_i^\star(\varepsilon)}{\kappa^\star(\varepsilon)},
  \quad
  i \in \{1,\ldots,N\}.
\end{equation}
We denote by $\alpha_S(\varepsilon)$, $\alpha_{S_u}(\varepsilon)$, and $\alpha_{\Gamma_U}(\varepsilon)$ the corresponding values of $\alpha_S$, $\alpha_{S_u}$, and $\alpha_{\Gamma_U}$ with $\Omega_i^\star(\varepsilon)$, respectively.

We next choose $\eta$ for the fixed $\varepsilon$.
For fixed $\hat\Omega^\star(\varepsilon)$ and $\sigma \in (0,1/\alpha_S(\varepsilon))$, we seek $\eta$ that minimizes $\hat{\rho}$, which is achieved by minimizing $\beta$.
Indeed, by differentiating $\hat{\rho}$ with respect to $\beta$, we obtain
\begin{equation}
  \frac{d\hat{\rho}}{d\beta}
  =
  \frac{
    (1+\delta^{-1})\alpha_{S_u}(\varepsilon)(1-\varepsilon)
    +
    \alpha_{\Gamma_U}(\varepsilon)(1+\delta)
  }{
    \left(
      1-\varepsilon-\alpha_{\Gamma_U}(\varepsilon)\beta
    \right)^2
  }
  \geq 0.
\end{equation}
This means that, for fixed $\varepsilon$, $\hat{\Omega}^\star(\varepsilon)$, $\sigma$, and $\delta>0$, $\hat{\rho}$ is a nondecreasing function of $\beta$ on $1-\varepsilon-\alpha_{\Gamma_U}(\varepsilon)\beta > 0$.
Hence, it suffices to minimize $\beta$ with respect to $\eta$ for minimizing $\hat{\rho}$.

The following lemma provides such a choice of $\eta$.

\begin{lemma} \label{lem:choise_eta}
  Fix $\varepsilon$ and $\sigma \in (0,1/\alpha_S(\varepsilon))$.
  Then
  \begin{equation} \label{eq:eta_star}
    \eta = \eta^\star(\sigma; \varepsilon) \coloneqq \frac{\sqrt{\sigma\alpha_S(\varepsilon)}}{1-\sqrt{\sigma\alpha_S(\varepsilon)}},
  \end{equation}
  is the unique minimizer of $\beta$ over
  \begin{equation} \label{eq:admissible_eta}
    \eta > \frac{\sigma\alpha_S(\varepsilon)}{1-\sigma\alpha_S(\varepsilon)},
  \end{equation}
  and the corresponding minimum value is given by
  \begin{equation} \label{eq:beta_min}
    \beta_{\min}(\sigma; \varepsilon) = \frac{\sigma}{\left(1-\sqrt{\sigma\alpha_S(\varepsilon)}\right)^2}.
  \end{equation}
\end{lemma}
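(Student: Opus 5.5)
Write $a \coloneqq \sigma\alpha_S(\varepsilon) \in (0,1)$, so that
\begin{equation}
  \beta(\eta) = \frac{\sigma(1+\eta)}{1 - a(1+\eta^{-1})},
\end{equation}
defined on the admissible set $\eta > a/(1-a)$, which is exactly where the denominator is positive. The proof is a one-variable minimization, and the plan has four steps: rewrite $\beta$ as a rational function of $\eta$, find its critical point, show that point is the unique global minimizer on the admissible interval, and evaluate $\beta$ there.

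\textbf{Rewriting $\beta$.} Multiplying numerator and denominator by $\eta$ gives
\begin{equation}
  \beta(\eta) = \frac{\sigma\,\eta(1+\eta)}{(1-a)\eta - a}.
\end{equation}
Substituting $\eta = t + a/(1-a)$ with $t>0$ is possible, but a direct derivative is simpler.

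\textbf{Critical point.} Set $h(\eta) = \eta(1+\eta)/((1-a)\eta - a)$. The numerator of $h'$ is
\begin{equation}
  (1+2\eta)\bigl((1-a)\eta - a\bigr) - (1-a)(\eta+\eta^2) = (1-a)\eta^2 - 2a\eta - a.
\end{equation}
Its roots are $\eta = \bigl(a \pm \sqrt{a}\bigr)/(1-a)$. Only the root with the plus sign is positive, and it simplifies to
\begin{equation}
  \eta = \frac{\sqrt{a}(1+\sqrt{a})}{(1-\sqrt{a})(1+\sqrt{a})} = \frac{\sqrt{a}}{1-\sqrt{a}} = \eta^\star.
\end{equation}

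\textbf{Uniqueness and minimality.} First, $\eta^\star$ is admissible: $\sqrt{a}/(1-\sqrt{a}) > a/(1-a)$ reduces to $1+\sqrt{a} > \sqrt{a}$ after multiplying by $(1-\sqrt{a})/\sqrt{a}>0$ and using $1-a=(1-\sqrt{a})(1+\sqrt{a})$. Next, the quadratic $(1-a)\eta^2 - 2a\eta - a$ is negative between its roots and positive beyond the larger one. Its smaller root is negative, so on the admissible interval $h' < 0$ for $\eta < \eta^\star$ and $h' > 0$ for $\eta > \eta^\star$. Hence $\eta^\star$ is the unique global minimizer there. As a consistency check, $\beta \to \infty$ both as $\eta$ decreases to $a/(1-a)$ and as $\eta \to \infty$.

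\textbf{Minimum value.} With $s = \sqrt{a}$ we have
\begin{equation}
  1+\eta^\star = \frac{1}{1-s}, \qquad 1+(\eta^\star)^{-1} = \frac{1}{s},
\end{equation}
so the denominator of $\beta$ is $1 - s^2/s = 1-s$. Therefore
\begin{equation}
  \beta_{\min} = \frac{\sigma}{(1-s)^2},
\end{equation}
which is \eqref{eq:beta_min}.

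The only real obstacle is the sign bookkeeping on the admissible interval: verifying that $\eta^\star$ lies inside it and that the other root lies outside. Both checks are immediate from the factorization $1-a=(1-\sqrt{a})(1+\sqrt{a})$.
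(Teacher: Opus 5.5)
Your proof is correct and follows essentially the same route as the paper: you rewrite $\beta$ as $\sigma\eta(1+\eta)/((1-a)\eta-a)$, locate the admissible root $\sqrt{a}/(1-\sqrt{a})$ of the convex quadratic $(1-a)\eta^2-2a\eta-a$, use its sign pattern on $\eta>a/(1-a)$ to get a unique minimizer, and evaluate $\beta$ there. The differences are cosmetic: you check admissibility by clearing the factor $(1-\sqrt{a})/\sqrt{a}$, and you compute the minimum through $1+(\eta^\star)^{-1}=1/\sqrt{a}$ in the original form of $\beta$, whereas the paper uses $\eta_+-a/(1-a)=\sqrt{a}/(1-a)$ and $(1-a)\eta^\star-a=\sqrt{a}$.
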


\begin{proof}
  Define $a \coloneqq \sigma \alpha_S(\varepsilon)$.
  Since $\sigma \in (0,1/\alpha_S(\varepsilon))$, it follows that $0 < a < 1$.
  Then, \eqref{eq:admissible_eta} can be written as
  \begin{equation} \label{lemma4:feasible_eta}
    \eta > \frac{a}{1-a}.
  \end{equation}
  For fixed $\sigma$ and $\varepsilon$, we define
  \begin{equation}
    \beta_{\sigma,\varepsilon}(\eta)
    \coloneqq \frac{\sigma(1+\eta)}{1-\sigma(1+\eta^{-1})\alpha_S(\varepsilon)}.
  \end{equation}
  Multiplying the numerator and denominator by $\eta$, we can rewrite it as
  \begin{equation}
    \beta_{\sigma,\varepsilon}(\eta)
    = \frac{\sigma \eta(1+\eta)}{(1-a)\eta-a}.
  \end{equation}
  Since $\eta > a/(1-a)$, we have $(1-a)\eta-a > 0$ on the feasible set \eqref{lemma4:feasible_eta}.

  Differentiating $\beta_{\sigma,\varepsilon}(\eta)$ with respect to $\eta$ yields
  \begin{equation}
    \frac{d}{d\eta}\beta_{\sigma,\varepsilon}(\eta)
    = \sigma \frac{h(\eta)}{\bigl((1-a)\eta-a\bigr)^2},
  \end{equation}
  where
  \begin{equation}
    h(\eta) \coloneqq (1-a)\eta^2 - 2a\eta - a.
  \end{equation}
  Since $((1-a)\eta-a)^2 > 0$ for $\eta$ satisfying \eqref{lemma4:feasible_eta}, the sign of $d\beta_{\sigma,\varepsilon}(\eta)/d\eta$ is determined by the sign of $h(\eta)$.
  By solving $h(\eta)=0$, we obtain
  \begin{equation}
    \eta
    = \frac{2a \pm \sqrt{4a^2 + 4a(1-a)}}{2(1-a)}
    = \frac{a \pm \sqrt{a}}{1-a}.
  \end{equation}
  Hence, the two roots are given by
  \begin{align}
    \eta_-
    &= \frac{a-\sqrt{a}}{1-a}
    = -\frac{\sqrt{a}}{1+\sqrt{a}}
    < 0, \\
    \eta_+
    &= \frac{a+\sqrt{a}}{1-a}
    = \frac{\sqrt{a}}{1-\sqrt{a}}.
  \end{align}
  Therefore, $\eta_-$ does not satisfy \eqref{lemma4:feasible_eta}.
  Moreover,
  \begin{equation}
    \eta_+ - \frac{a}{1-a}
    = \frac{\sqrt{a}}{1-\sqrt{a}} - \frac{a}{1-a}
    = \frac{\sqrt{a}}{1-a}
    > 0,
  \end{equation}
  which means that $\eta_+$ satisfies \eqref{lemma4:feasible_eta}.

  Since $1-a>0$, the quadratic function $h(\eta)$ is convex, and since its roots are $\eta_-<0$ and $\eta_+>\frac{a}{1-a}$, it follows that
  \begin{equation}
    \begin{cases}
      h(\eta) < 0, & \frac{a}{1-a} < \eta < \eta_+, \\
      h(\eta) > 0, & \eta > \eta_+.
    \end{cases}
  \end{equation}
  Hence, the sign of  $d\beta_{\sigma,\varepsilon}(\eta)/d\eta$ is
  \begin{equation}
    \begin{cases}
      \frac{d}{d\eta}\beta_{\sigma,\varepsilon}(\eta) < 0, & \frac{a}{1-a} < \eta < \eta_+, \\
      \frac{d}{d\eta}\beta_{\sigma,\varepsilon}(\eta) > 0, & \eta > \eta_+.
    \end{cases}
  \end{equation}
  Therefore, $\beta_{\sigma,\varepsilon}(\eta)$ is strictly decreasing on $(a/(1-a),\eta_+)$ and strictly increasing on $(\eta_+,\infty)$.
  Consequently,
  \begin{equation}
    \eta = \frac{\sqrt{\sigma\alpha_S(\varepsilon)}}{1-\sqrt{\sigma\alpha_S(\varepsilon)}}
  \end{equation}
  is the unique minimizer of $\beta_{\sigma,\varepsilon}(\eta)$ over \eqref{lemma4:feasible_eta}.

  Finally, we derive the corresponding minimum value of $\beta$.
  Since
  \begin{align}
    1+\eta^\star(\sigma;\varepsilon)
    &= 1 + \frac{\sqrt{a}}{1-\sqrt{a}}
    = \frac{1}{1-\sqrt{a}},
  \end{align}
  and
  \begin{align}
    (1-a)\eta^\star(\sigma;\varepsilon) - a
    &= (1-a)\frac{\sqrt{a}}{1-\sqrt{a}} - a \\
    &= (1+\sqrt{a})\sqrt{a} - a
    = \sqrt{a},
  \end{align}
  it follows that
  \begin{align}
    \beta_{\min}(\sigma;\varepsilon)
    &= \beta_{\sigma,\varepsilon}\left(\eta^\star(\sigma;\varepsilon)\right)
    = \frac{\sigma \eta^\star(\sigma;\varepsilon)\bigl(1+\eta^\star(\sigma;\varepsilon)\bigr)}{(1-a)\eta^\star(\sigma;\varepsilon)-a} \\
    &= \frac{\sigma}{(1-\sqrt{a})^2}
    = \frac{\sigma} {\left(1-\sqrt{\sigma\alpha_S(\varepsilon)}\right)^2}.
  \end{align}
  This completes the proof.
\end{proof}


We now select $\delta$ for the same fixed $\varepsilon$.
After substituting $\eta = \eta^\star(\sigma; \varepsilon)$ into $\hat\rho$, the numerator of $\hat{\rho}$ becomes
\begin{equation}
  1+\delta+\left(1+\delta^{-1}\right)\alpha_{S_u}(\varepsilon)\beta_{\min}(\sigma; \varepsilon).
\end{equation}
Thus, for fixed $\sigma$ and $\varepsilon$, we minimize this expression with respect to $\delta > 0$ in order to minimize $\hat\rho$.

The following lemma provides such a choice of $\delta$.

\begin{lemma} \label{lem:choise_delta}
  Fix $\varepsilon$ and $\sigma \in (0,1/\alpha_S(\varepsilon))$ and define
  \begin{equation} \label{eq:def_f}
    f(\delta) \coloneqq 1+\delta+\left(1+\delta^{-1}\right)\alpha_{S_u}(\varepsilon)\beta_{\min}(\sigma; \varepsilon).
  \end{equation}
  Then
  \begin{equation} \label{eq:delta_star}
    \delta = \delta^\star(\sigma; \varepsilon) \coloneqq \sqrt{\alpha_{S_u}(\varepsilon)\beta_{\min}(\sigma; \varepsilon)}
  \end{equation}
  is the unique minimizer of $f(\delta)$ over $\delta > 0$, and the corresponding minimum value is
  \begin{equation} \label{eq:f_min}
    f(\delta^\star(\sigma; \varepsilon)) = \left(1+\sqrt{\alpha_{S_u}(\varepsilon)\beta_{\min}(\sigma; \varepsilon)}\right)^2.
  \end{equation}
\end{lemma}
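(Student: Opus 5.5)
The proof is an elementary one-variable minimization. Throughout, write $b \coloneqq \alpha_{S_u}(\varepsilon)\beta_{\min}(\sigma;\varepsilon)$, so that
\begin{equation}
  f(\delta) = 1+\delta+(1+\delta^{-1})b = (1+b) + \delta + \frac{b}{\delta}.
\end{equation}
The first step is to show that $b>0$. By Lemma~\ref{lem:choise_eta}, $\beta_{\min}(\sigma;\varepsilon)=\sigma/(1-\sqrt{\sigma\alpha_S(\varepsilon)})^2>0$, since $\sigma\in(0,1/\alpha_S(\varepsilon))$. Next we need $\alpha_{S_u}(\varepsilon)>0$. This holds because $S_u=c^2L^2\otimes F^\top RF\succeq 0$ is nonzero: $L\neq 0$ for a connected graph with $N\ge 2$, and $F\neq 0$ generically. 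The case where the maximal eigenvalue is $0$ is degenerate, and I would remark that if $b=0$ the infimum is not attained over $\delta>0$. So I will assume $b>0$, as the statement implicitly does.

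Next, on $\delta>0$ compute $f'(\delta)=1-b/\delta^2$ and $f''(\delta)=2b/\delta^3>0$. Hence $f$ is strictly convex on $(0,\infty)$, and its only stationary point is $\delta^\star=\sqrt{b}$. Strict convexity, or equivalently the sign pattern $f'<0$ on $(0,\sqrt b)$ and $f'>0$ on $(\sqrt b,\infty)$, gives that $\delta^\star$ is the unique global minimizer over $\delta>0$.

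As an alternative, AM–GM gives $\delta+b/\delta\ge 2\sqrt b$, with equality if and only if $\delta=\sqrt b$. This yields existence and uniqueness in one line.

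Finally, substitute $\delta^\star=\sqrt{b}$:
\begin{equation}
  f(\delta^\star)=1+b+2\sqrt b=(1+\sqrt b)^2,
\end{equation}
which is exactly \eqref{eq:f_min}. The only point needing care is the positivity of $b$, which guarantees that the minimizer lies in the open set $\delta>0$. The rest is routine.
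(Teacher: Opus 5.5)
Your proposal is correct and follows essentially the same route as the paper: set $b=\alpha_{S_u}(\varepsilon)\beta_{\min}(\sigma;\varepsilon)$, show $b>0$, rewrite $f(\delta)=1+b+\delta+b/\delta$, and use the sign of $f'(\delta)=1-b/\delta^2$ to obtain the unique minimizer $\sqrt{b}$ and the value $(1+\sqrt{b})^2$. The one difference is that the paper simply asserts $F\neq 0$ to get $\alpha_{S_u}(\varepsilon)>0$, whereas you correctly flag $b>0$ as an implicit assumption, since $F=0$ (equivalently $B^\top PA=0$) is not excluded by the standing hypotheses when $A$ is Schur.
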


\begin{proof}
  Define $b \coloneqq \alpha_{S_u}(\varepsilon)\beta_{\min}(\sigma;\varepsilon)$.
  Since $L$ is the Laplacian matrix of the connected graph, and $R\succ 0$ and $F \neq 0$, $S_u = c^2L^2 \otimes F^\top RF \neq0$, and $\alpha_{S_u}(\varepsilon)>0$.
  Furthermore, under the condition \eqref{eq:sigma_condition1}, $\beta_{\min}(\sigma;\varepsilon)>0$.
  Hence, we have $b > 0$.
  Then, for all $\delta>0$, the function $f(\delta)$ is written as
  \begin{equation}
    f(\delta) = 1 + b + \delta + \frac{b}{\delta}.
  \end{equation}

  By differentiating $f(\delta)$ with respect to $\delta$, we obtain
  \begin{equation}
    \frac{d}{d\delta}f(\delta)
    = 1 - \frac{b}{\delta^2}
    = \frac{\delta^2-b}{\delta^2}.
  \end{equation}
  Since $\delta^2>0$ for all $\delta>0$, the sign of $\frac{d}{d\delta}f(\delta)$ is determined by the sign of $\delta^2-b$.
  Therefore,
  \begin{equation}
    \begin{cases}
      \frac{d}{d\delta}f(\delta) < 0, & 0<\delta<\sqrt{b}, \\
      \frac{d}{d\delta}f(\delta) = 0, & \delta=\sqrt{b}, \\
      \frac{d}{d\delta}f(\delta) > 0, & \delta>\sqrt{b}.
    \end{cases}
  \end{equation}
  Hence, $f(\delta)$ is strictly decreasing on $(0,\sqrt{b})$ and strictly increasing on $(\sqrt{b},\infty)$.
  Then it follows that
  \begin{equation}
    \delta
    =
    \sqrt{b}
    =
    \sqrt{\alpha_{S_u}(\varepsilon)\beta_{\min}(\sigma;\varepsilon)}
  \end{equation}
  is the unique minimizer of $f(\delta)$ over $\delta>0$.

  Finally, substituting $\delta^\star(\sigma;\varepsilon)=\sqrt{b}$ into $f(\delta)$ yields
  \begin{align}
    f(\delta^\star(\sigma;\varepsilon))
    &= 1 + b + \sqrt{b} + \frac{b}{\sqrt{b}} \\
    &= 1 + 2\sqrt{b} + b \\
    &= \left(1+\sqrt{b}\right)^2 \\
    &= \left(
      1+\sqrt{\alpha_{S_u}(\varepsilon)\beta_{\min}(\sigma;\varepsilon)}
    \right)^2,
  \end{align}
  which completes the proof.
\end{proof}


After substituting \eqref{eq:eta_star} and \eqref{eq:delta_star} into \eqref{eq:def_rhostar}, for a given $\varepsilon$, the sufficient condition in Theorem~\ref{th:lq_performance_guarantee} reduces to
\begin{equation}
  \underline\rho(\sigma; \varepsilon) \le \rho,
\end{equation}
where
\begin{equation}
  \underline\rho(\sigma; \varepsilon)
  =
  \frac{
    \left(1+\sqrt{\alpha_{S_u}(\varepsilon)\beta_{\min}(\sigma; \varepsilon)}\right)^2
  }{
    1-\varepsilon-\alpha_{\Gamma_U}(\varepsilon)\beta_{\min}(\sigma; \varepsilon)
  }.
\end{equation}
Therefore, for each fixed $\varepsilon$, we determine $\sigma$ by solving the following maximization problem:
\begin{maxi}|s|
  {\sigma}{\sigma}{\label{prob:maxi_sigma}}{}
  \addConstraint{0 < \sigma < \frac{1}{\alpha_S(\varepsilon)}}
  \addConstraint{1-\varepsilon-\alpha_{\Gamma_U}(\varepsilon)\beta_{\min}(\sigma; \varepsilon) > 0}
  \addConstraint{\underline\rho(\sigma; \varepsilon) \le \rho.}
\end{maxi}
This is an optimization problem over a one-dimensional feasible set.
Moreover, since $\beta_{\min}(\sigma; \varepsilon)$ is strictly increasing in $\sigma$, the feasible set is an interval, and hence a bisection method can be used to compute its largest feasible point numerically.
We denote the value obtained by this bisection by $\sigma^\star(\varepsilon)$.

Finally, we design $\varepsilon$ and accordingly determine the other parameters.
Since $\underline\rho(\sigma; \varepsilon)$ is increasing in $\varepsilon$, a larger $\varepsilon$ makes the performance constraint more restrictive.
On the other hand, $\varepsilon$ appears only in the denominator of $\underline\rho$, so we determine it by a grid search.
More precisely, we consider a finite set $\mathcal E \subset (0, 1 - 1/\rho)$, where the upper bound follows from the necessary condition
\begin{equation}
  \frac{1}{1-\varepsilon} < \rho,
\end{equation}
which is obtained by taking the limit $\beta_{\min}(\sigma; \varepsilon) \downarrow 0$.
For each $\varepsilon \in \mathcal E$, we solve the maximization problem \eqref{prob:maxi_sigma} and obtain the corresponding optimizer $\sigma^\star(\varepsilon)$.
We then select
\begin{equation}
  \varepsilon^\star
  \in
  \arg\max_{\varepsilon \in \mathcal E}
  \sigma^\star(\varepsilon),
\end{equation}
and set $\sigma = \sigma^\star(\varepsilon^\star)$ and
\begin{equation}
  \Omega_i = \Omega_i^\star(\varepsilon^\star),\quad
  \eta = \eta^\star(\sigma^\star; \varepsilon^\star),\quad
  \delta = \delta^\star(\sigma^\star; \varepsilon^\star)
\end{equation}
according to \eqref{eq:Omega_i_star}, \eqref{eq:eta_star}, and \eqref{eq:delta_star}, respectively.

\begin{remark}
  The proposed parameter design method is a heuristic procedure, and hence the obtained parameters do not necessarily yield the largest possible value of $\sigma$.
  Likewise, while we aim to reduce the number of transmissions by enlarging $\sigma$, it does not in itself guarantee the smallest number of transmissions.
  Instead, the design method in this section provides a tractable way to search for parameters that satisfy the given performance constraint and may lead to reduced communication in practice.
\end{remark}

\section{Numerical Example}
\label{sec:numerical}

\subsection{Simulation Setup}

Consider a group of eight oscillators whose dynamics are given by the following continuous-time linear system:
\begin{equation} \label{eq:sim_oscillator}
  \dot{x}_i(t) =
  \begin{bmatrix}
    0 & 1 \\
    -1 & 0
  \end{bmatrix} x_i(t)
  +
  \begin{bmatrix}
    0\\
    1
  \end{bmatrix} u_i(t),\quad i \in \{1,\dots,8\},
\end{equation}
where $x_i(t) = [x_{i,1}(t)\ x_{i,2}(t)]^\top \in \R^2$ is the state vector and $u_i(t) \in \R$ is the control input.
The communication graph is the cycle graph depicted in Fig.~\ref{fig:cycle_graph}.
We discretize \eqref{eq:sim_oscillator} using the sampling period $0.05$ and obtain the discrete-time system \eqref{eq:mas}.
The weighting matrices $Q$, $Q_\ell$, and $R$ are set as
\begin{equation}
  Q = Q_\ell = \begin{bmatrix}
    2 & 0 \\
    0 & 1
  \end{bmatrix},\quad
  R = 1,
\end{equation}
and $\rho=1.2$.
By applying the parameter design method in Section~\ref{sec:parameter} for $\rho=1.2$, we obtain
\begin{align}
  &\varepsilon = 0.0380,\quad
  \sigma = 8.985 \times 10^{-6},\\
  &\Omega_i =
  \begin{bmatrix}
    0.0286 & 0.0372\\
    0.0372 & 0.0964
  \end{bmatrix},\quad i\in\{1,\dots,8\},
\end{align}
and $\underline\rho = 1.1999$, which confirms that the designed parameters satisfy the condition in Theorem~\ref{th:lq_performance_guarantee}.

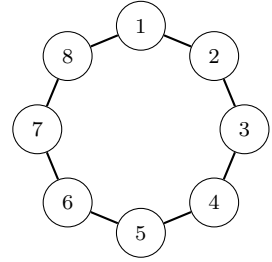
\begin{figure}[t]
  \centering
  \begin{tikzpicture}[
    scale=0.95,
    every node/.style={
      circle,
      draw=black,
      fill=white,
      inner sep=0pt,
      minimum size=6.5mm,
      font=\footnotesize
    }
  ]
    \def\r{1.45}
    \foreach \i [evaluate=\i as \ang using {90-(\i-1)*45}] in {1,...,8} {
      \node (v\i) at (\ang:\r) {$\i$};
    }
    \foreach \i/\j in {1/2,2/3,3/4,4/5,5/6,6/7,7/8,8/1} {
      \draw[thick] (v\i) -- (v\j);
    }
  \end{tikzpicture}
  \caption{Communication graph.}
  \label{fig:cycle_graph}
\end{figure}


\begin{figure}[t]
  \centering
  \includegraphics[width=0.98\linewidth]{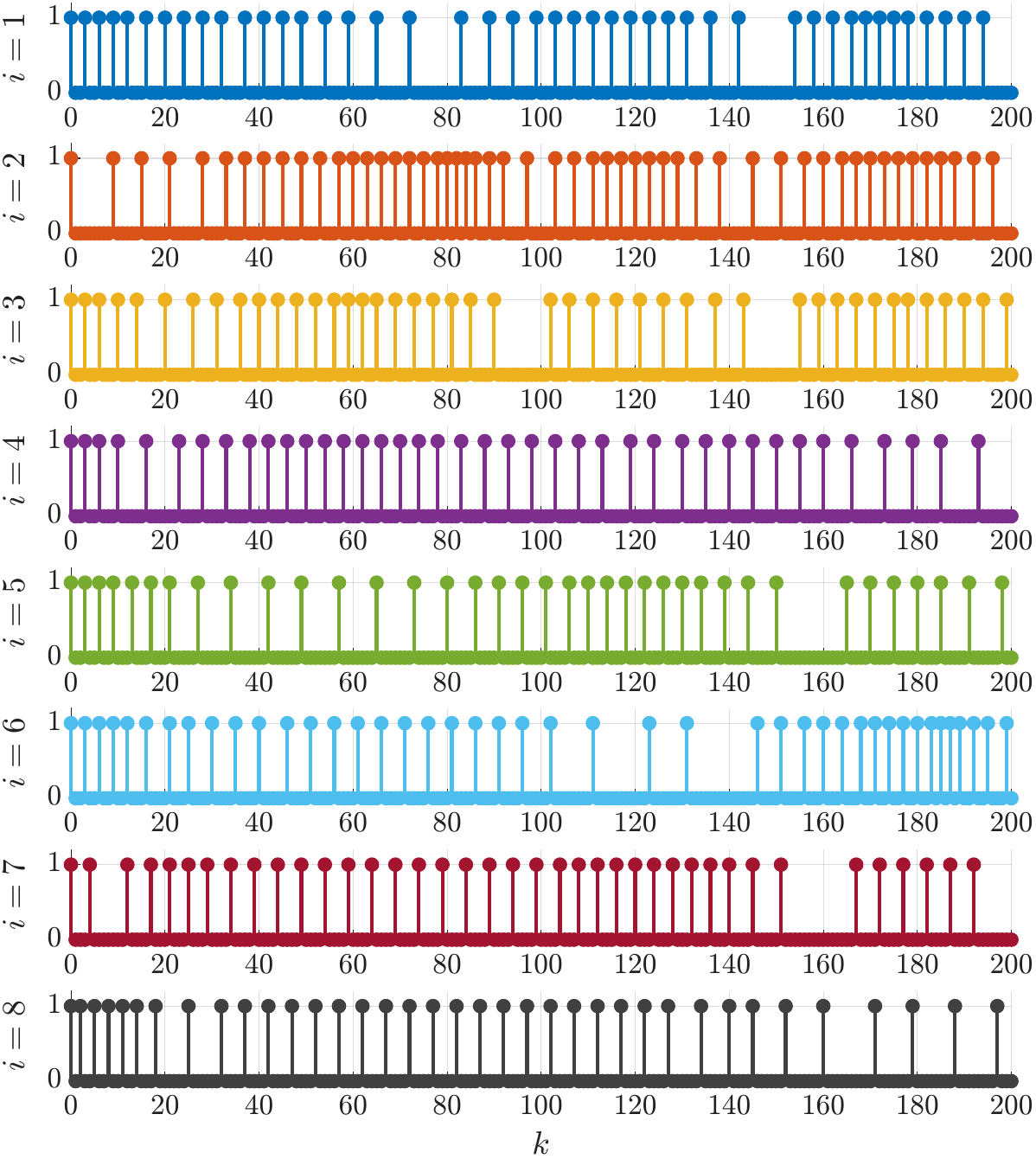}
  \caption{Transmission instants of each agent for the proposed event-triggered method, where a value of $1$ indicates that agent $i$ transmits at that time instant, and a value of $0$ otherwise.}
  \label{fig:transmission}
\end{figure}

\begin{figure}[t]
  \centering
  \includegraphics[width=0.98\linewidth]{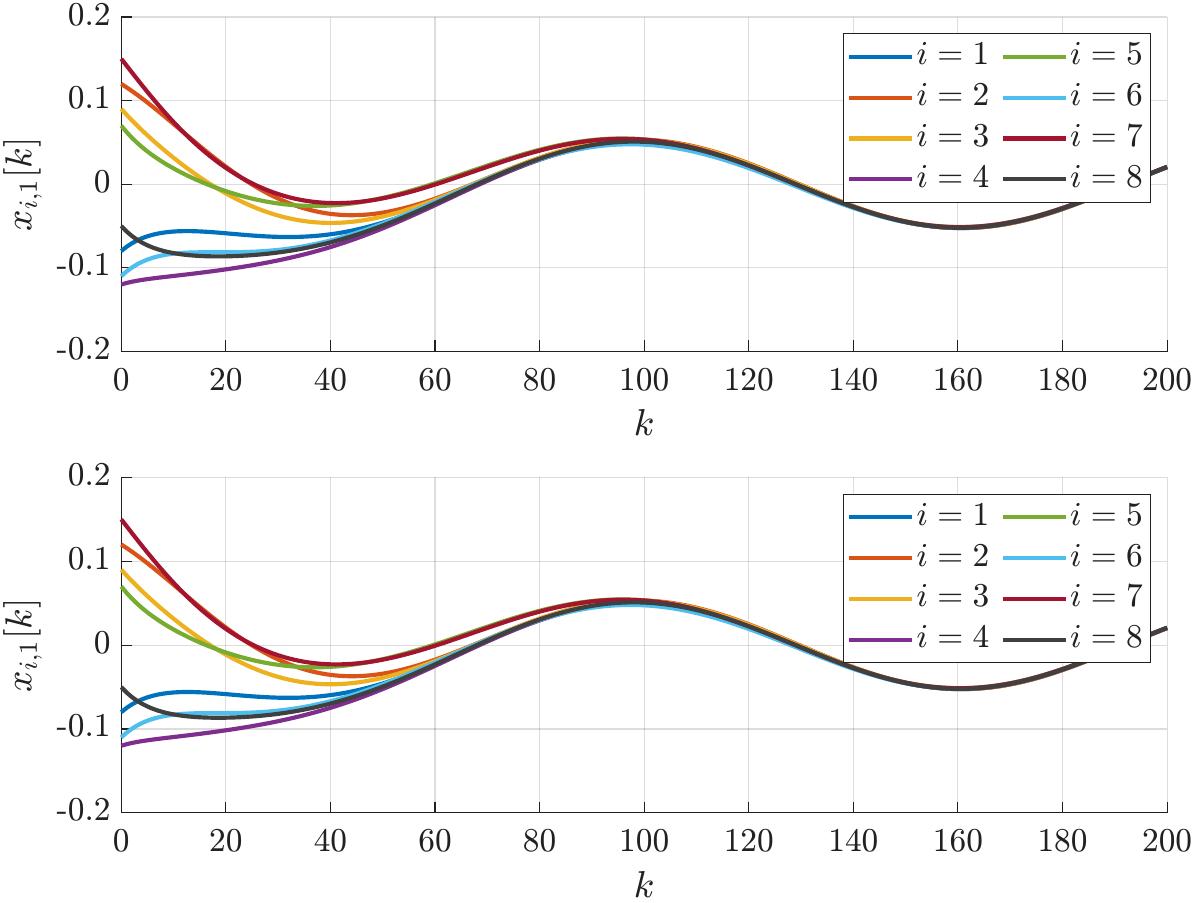}
  \caption{State trajectories $x_{i,1}[k]$ for the all-time communication scheme (top) and the proposed event-triggered method (bottom).}
  \label{fig:state1}
\end{figure}


\begin{figure}[t]
  \centering
  \includegraphics[width=0.98\linewidth]{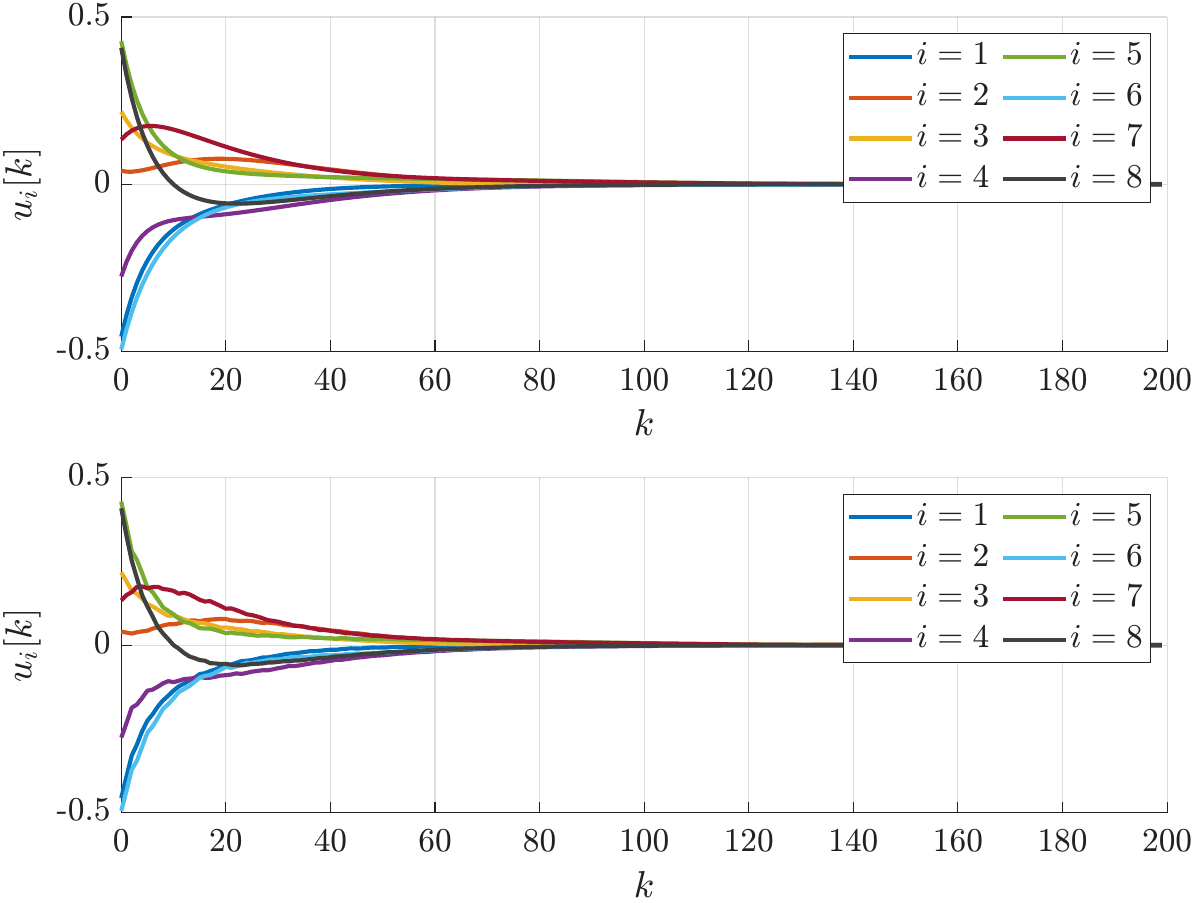}
  \caption{Control inputs for the all-time communication scheme (top) and the proposed event-triggered method (bottom).}
  \label{fig:input}
\end{figure}


\subsection{Simulation Results}

Fig.~\ref{fig:transmission} illustrates the transmission instants of each agent for the proposed event-triggered method.
As shown in the figure, transmissions occur asynchronously and only at a subset of the time instants.
In the present simulation, transmissions occur at about $20.8\%$ of all possible transmission opportunities across all agents.

Fig.~\ref{fig:state1} depicts the trajectories of the first state component under the all-time communication scheme and the proposed event-triggered method, and Fig.~\ref{fig:input} presents the corresponding control inputs.
These figures show that the proposed method also drives all agents to consensus asymptotically.
In addition, the transient responses under the proposed method remain close to those under the all-time communication scheme over the control horizon.
This indicates that, although the agents communicate intermittently, the resulting closed-loop behavior remains similar to that of the baseline distributed control with all-time communication.


\subsection{Discussion}

For this numerical example, the proposed method achieves consensus with trajectories close to those of the all-time communication scheme while using only about $20.8\%$ of all possible transmissions.
The realized cost ratio over the horizon $200$ is $J_{\mathrm{etc}}/J_{\mathrm{all}} = 0.9960$, which is well below the prescribed performance level $\rho = 1.2$.
This result suggests the conservatism of the present framework.
The following observations help explain this behavior.

First, Theorem~\ref{th:lq_performance_guarantee} provides only a sufficient condition for $J_{\mathrm{etc}}(x[0]) \leq \rho J_{\mathrm{all}}(x[0])$.
In addition, the global LQ performance constraint must be satisfied through asynchronous triggering decisions based only on locally available information.
Under this information structure, the exact global performance degradation cannot be evaluated directly by each agent, and the analysis in Section~\ref{sec:main_results} relies on scalar worst-case quantities such as $\alpha_S$, $\alpha_{S_u}$, and $\alpha_{\Gamma_U}$.
As a result, the condition is tractable but generally not tight.

Furthermore, the parameter design in Section~\ref{sec:parameter} is heuristic rather than globally optimal.
Accordingly, the obtained parameter $\sigma$ is guaranteed only to satisfy the sufficient condition of Theorem~\ref{th:lq_performance_guarantee}, and is not necessarily the largest feasible value under the given performance constraint.

It is also worth noting that, in this numerical example, the cost ratio satisfies $J_{\mathrm{etc}} < J_{\mathrm{all}}$ even though the proposed method exhibits fewer transmissions than the all-time communication scheme.
As stated in Remark~\ref{rem:local_cost}, $J_{\mathrm{all}}$ is not the globally minimal value of the LQ cost \eqref{eq:lq_cost}.
Moreover, the all-time communication scheme and the proposed method use different information structures to compute their control inputs.
Hence, the proposed method uses more information to compute control inputs than the all-time communication scheme, and a lower communication rate does not necessarily imply a higher LQ cost.

Nevertheless, this numerical example demonstrates the potential of the proposed framework for achieving a favorable trade-off between communication reduction and closed-loop performance.
It also suggests that the conservatism stems from the gap between a global performance requirement and local asynchronous triggering decisions.
Therefore, one possible way to reduce this conservatism is to establish a framework that enables each agent to evaluate the effect of its local triggering decision on the global performance degradation more accurately.

\section{Conclusion}
\label{sec:conclusion}

This paper has studied distributed event-triggered consensus control for discrete-time linear multi-agent systems under an LQ performance constraint.
We proposed a distributed event-triggered control method that guarantees the prescribed level of LQ performance as well as consensus when the triggering parameters are properly designed.
In addition, we presented a tractable parameter design method for obtaining feasible triggering parameters while promoting transmission reduction.
Numerical simulations illustrated that the proposed method achieves consensus with transient responses close to those of the all-time communication scheme while reducing the number of transmissions.
Future work includes reducing conservatism in the theoretical analysis and parameter design, and extending the proposed framework to more general communication settings.

\balance
\bibliographystyle{IEEEtran}
\bibliography{myrefs}

\end{document}